\newtcolorbox{promptlisting}[1][]{
  enhanced,
  breakable,
  sharp corners,          
  boxrule=0.6pt,         
  top=2pt, bottom=2pt, left=2pt, right=2pt,
  before skip=8pt,
  after skip=8pt,
  colframe=gray!75!black,    
  colback=gray!5,            
  colbacktitle=gray!75!black,
  coltitle=white,            
  titlerule=0pt,            
  fonttitle=\bfseries\small,
  title={#1},
  listing only,
  listing options={
    basicstyle=\fontfamily{pcr}\selectfont\footnotesize,
    breaklines=true,
    breakatwhitespace=true,
    columns=fullflexible,
    aboveskip=0pt,
    belowskip=0pt
  },
  extras broken={
    frame empty,
    overlay={
        \draw[gray!75!black, line width=0.6pt] (frame.north west) rectangle (frame.south east);
    }
  }
}
\newtcolorbox{fullcasebox}[1][]{
  enhanced,
  colframe=black,          
  colback=white,           
  colbacktitle=gray!10,    
  coltitle=black,          
  fonttitle=\bfseries\small,
  title={#1},
  titlerule=0.4pt,         
  titlerule style={gray!50},
  boxrule=0.6pt,           
  sharp corners,
  top=2pt, bottom=2pt, left=2pt, right=2pt,
  fontupper=\footnotesize,
  parbox=false,
  halign=flush left,
  breakable,
  before skip=8pt,
  after skip=8pt,
  extras broken={
        frame empty, 
        overlay={   
            \draw[black, line width=0.6pt] (frame.north west) rectangle (frame.south east);
        }
    },
}
\newcolumntype{Y}{>{\centering\arraybackslash}p{1.4cm}}
\theoremstyle{definition}
\newtheorem{definition}{Definition}
\newtheorem{assumption}{Assumption}
\theoremstyle{plain}
\newtheorem{proposition}{Proposition}
\newtheorem{corollary}{Corollary}
\setlist{nosep, leftmargin=*}
\title{Look Twice before You Leap: A Rational Framework for \\ Localized Adversarial Text Anonymization}
\author{
  Donghang Duan$^1$,\ Xu Zheng$^1$,\ Yuefeng He$^1$, \\ 
  \textbf{Chong Mu}$^1$,\ \textbf{Leyi Cai}$^2$,\ \textbf{Lizong Zhang}$^1$ \\
  $^1$School of Computer Science and Engineering, \\
  University of Electronic Science and Technology of China \\
  $^2$School of Information, Renmin University of China
}
\begin{document}
\maketitle
\begin{abstract}
Current LLM-based frameworks for text anonymization usually rely on remote API services from powerful LLMs, 
which creates an inherent privacy paradox: users must disclose the raw data to untrusted third parties for guaranteed privacy preservation. 
Moreover, directly migrating current solutions to local small-scale models (LSMs) offers a suboptimal solution with severe utility collapse.
Our work argues that this failure stems not merely from the capability deficits of LSMs, but significantly from the inherent irrationality of the greedy adversarial strategies employed by current state-of-the-art (SOTA) methods. 
To address this drawback, we propose Rational Localized Adversarial Anonymization (RLAA), a fully localized and training-free framework featuring an Attacker-Arbitrator-Anonymizer architecture. 
We model the anonymization process as a trade-off between Marginal Privacy Gain (MPG) and Marginal Utility Cost (MUC), demonstrating that greedy strategies tend to drift into an irrational state. 
Instead, RLAA introduces an arbitrator that acts as a rationality gatekeeper, 
validating the attacker's inference to filter out ghost leaks. 
This mechanism promotes a rational early-stopping criterion, and structurally prevents utility collapse.
Extensive experiments on different benchmarks demonstrate that RLAA achieves a superior privacy-utility trade-off compared to strong baselines.
\end{abstract}

\section{Introduction}
Large language models (LLMs) are increasingly deployed to process real-world text 
containing sensitive Personal Identifiable Information (PII), such as medical records, 
legal documents and online self-disclosures \cite{bommasani2021opportunities, weidinger2021ethical, deusser2025survey}. 
To comply with regulatory requirements like GDPR and CCPA \cite{albrecht2016gdpr, bonta2022california}, 
effective text anonymization has become a prerequisite for the responsible use of such data.
However, this situation presents a tricky trade-off between semantic utility and privacy,
where over-anonymization destroys the semantic value while under-anonymization invites re-identification risks,
which is particularly acute in sensitive domains like smart healthcare and legal technology \cite{im2024exploring, morris2025diri, liu2025legal}.  
The rapid evolution of LLMs has exacerbated this situation because of their dual roles as context-aware anonymizers and superior attackers \cite{wang2025unique}. 
As a result, traditional NER-based anonymization methods are increasingly inadequate \cite{dernoncourt2017identification}, 
motivating a shift toward semantic anonymization that conceals latent cues of identity beyond surface-level entities \cite{loiseau2025tau}.

To address this situation, advanced anonymization frameworks based on LLMs have emerged.
As the current state-of-the-art paradigm, Feedback-guided Adversarial Anonymization (FgAA) \cite{staab2024large} employs a dynamic adversarial game: 
an attacker model attempts to re-identify personal information from the text generated by an anonymizer model, 
and the anonymizer uses the inference feedback to instruct the anonymization process for refinement. 
Despite its strong empirical performance, FgAA heavily relies on the capabilities of powerful or closed-source LLMs like GPT-4 \cite{achiam2023gpt}, 
which are typically accessible only via third-party APIs.
This dependency gives rise to a fundamental privacy paradox: 
to anonymize sensitive data, users must first disclose the raw content to external and uncontrollable service providers, 
which renders it unacceptable in high-sensitivity scenarios \cite{feretzakis2024trustworthy}.
Although subsequent works aim to mitigate this risk through localized deployment, they face distinct limitations. 
IncogniText \cite{frikha2024incognitext} compromises semantic utility by injecting ungrounded hallucinations. 
SEAL \cite{kim2025self} necessitates complex training pipelines that rely on high-quality synthetic datasets which remain scarce \cite{yukhymenko2024synthetic}.
Besides, our empirical analysis in Section \ref{sec:performance_analysis} reveals that it still suffers from severe utility collapse.

\begin{figure}[t]
    \centering
    \includegraphics[width=0.4\textwidth]{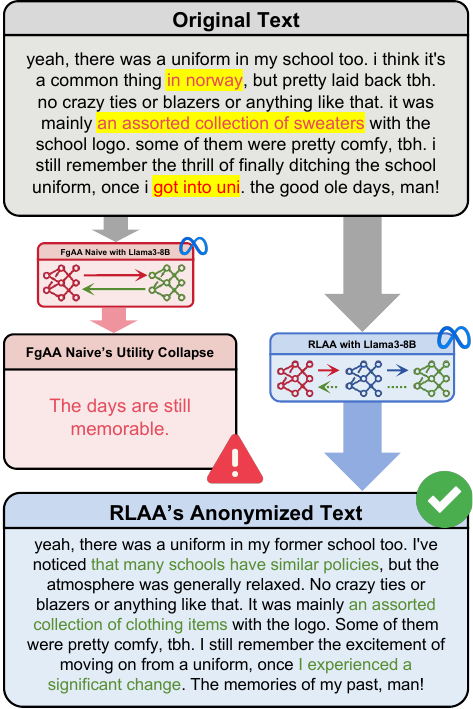}
    \vspace{0.2cm}
    \caption{Utility Collapse of FgAA’s Naive Migration.}
    \label{fig:utility_failure}
\end{figure}

A natural and simple solution to the privacy paradox is to migrate adversarial anonymization frameworks to fully localized environments using local small-scale models (LSMs).
However, our empirical investigation reveals that such a naive migration leads to severe utility collapse.
As illustrated in Figure \ref{fig:utility_failure}, the naive migration on Llama3-8B causes destructive over-editing. 
It indiscriminately strips away non-sensitive context and stylistic nuances, reducing the expressive original narrative to a generic and vacuous summary.
Besides, it is observed that this collapse persists even when implementing instruction refinement (\textit{e.g.}, outputting "unknown" for inferences with low confidence).
This observation demonstrates that heuristic constraints alone cannot prevent the utility collapse in greedy adversarial strategies.

We argue that this failure cannot be solely attributed to the limited capabilities of LSMs,
but instead arises from how greedy adversarial strategies behave under imperfect inferences.
From an economic perspective, the anonymization can be viewed as a sequence of decisions balancing \textbf{Marginal Privacy Gain (MPG)} against \textbf{Marginal Utility Cost (MUC)}.
By quantifying this ratio as the \textbf{Marginal Rate of Substitution ($\text{MRS}= \frac{\text{MUC}}{\text{MPG}}$)} \cite{mas1995microeconomic}, the primary driver of the utility collapse becomes clear through this economic lens: 
Current greedy strategies tend to drift into an economically inefficient state due to \textbf{LSMs' hallucinations} and \textbf{the law of diminishing returns}, 
where the model behaves irrationally by exhibiting exaggerated defense \cite{rottger2024xstest} against \textbf{ghost leaks} that are either hallucinated or negligible (detailed in Appendix \ref{app:theoretical_analysis}).
In this case, MUC remains positive while MPG approaches zero, pushing MRS to inefficient levels.

To overcome both the data dependency of distillation-based approaches and the utility collapse of naive migration, we propose \textbf{RLAA (Rational Localized Adversarial Anonymization)\footnote{Our code and datasets are available at \url{https://github.com/SowingG2333/RLAA}.}}, a training-free framework designed for fully localized deployment. 
The core idea of RLAA is to constrain the anonymization process to the economically rational condition defined in Section \ref{sec:problem_formulation}, 
thereby compensating for the capability deficits of LSMs without relying on explicit numerical optimization or parameter fine-tuning.
RLAA introduces a novel \textbf{Attacker–Arbitrator–Anonymizer (A–A–A)} architecture, in which the arbitrator acts as a rationality gatekeeper between attack feedback and anonymization actions. 
Moving beyond the paradigm of blindly leaping into modifications, RLAA compels the arbitrator to validate the attacker's inferences, which structurally prevents utility collapse by rejecting destructive edits driven by ghost leaks, and maintains robust privacy protection by focusing on valid leaks.
Crucially, this design leverages the \textbf{cognitive asymmetry} that verification is less complex and more reliable than generation for LSMs, which may hallucinate during open-ended inference but retain the ability to recognize errors in structured discrimination tasks \cite{cobbe2021training, wang2022self, madaan2023self, guan2024language}.

We conducted a comprehensive evaluation including baseline comparisons, ablation studies and generalization stress tests across several mainstream LLMs (Llama3-8B \cite{llama3modelcard}, Qwen2.5-7B \cite{qwen2.5} and DeepSeek-V3.2-Exp \cite{deepseekai2024deepseekv32}) on the PersonalReddit and reddit-self-disclosure datasets. 
Additionally, we validated the framework's economic rationality through a quantitative analysis based on the MRS metric. 
The results empirically demonstrate that RLAA structurally prevents utility collapse and achieves a superior privacy-utility trade-off compared to strong baselines.
In summary, the main contributions of this study are as follows:
\begin{itemize}[leftmargin=*, nosep]
    \item We argue that migrating adversarial anonymization to local environments is crucial for eliminating the privacy paradox, while a naive migration results in utility collapse, identified as a symptom of the economic irrationality inherent in greedy strategies, rather than merely capability deficits.
    \item RLAA is proposed as a localized and training-free framework featuring an Attacker-Arbitrator-Anonymizer (A-A-A) architecture. 
    By introducing an arbitrator, it structurally promotes rational decision-making to prevent utility collapse, providing a structured mechanism to preserve utility while reducing privacy risks without fine-tuning.
    \item Extensive experiments demonstrate that RLAA achieves a superior privacy-utility trade-off compared to multiple strong baselines and even Pareto dominance on the reddit-self-disclosure dataset, while also proving its mechanism generalization across different base models.
\end{itemize}

\section{Related Work}
\subsection{Inference Attacks through LLMs}
With increased reasoning capabilities, LLMs have evolved into potent privacy attackers capable of automatically inferring personal attributes from unstructured text \cite{wang2025unique}. 
\citet{staab24beyond} demonstrated that models can accurately deduce fine-grained PII from casual online comments. 
Extending to high-stakes domains, \citet{nyffenegger2024anonymity} revealed that LLMs could re-identify individuals in court decisions.
Recent audits by \citet{panda2025daiq} further confirmed that this inference capability is consistent across various model architectures.
These studies show that LLMs significantly lower the threshold for privacy breaches, enabling re-identification attacks to occur on an unprecedented scale \cite{staab24beyond}.

\subsection{Text Anonymization}
Recent advancements in text anonymization have marked a fundamental evolution from early statistical obfuscation techniques toward sophisticated semantic rewriting mechanisms. 
Generally, these studies can be divided into two categories:

\noindent\textbf{Traditional Approaches.}
In early efforts, Differential Privacy (DP) methods \cite{abadi2016deep, wu2023privacy, igamberdiev2023dp} and representation learning \cite{coavoux2018privacy, friedrich2019adversarial} served as the standard. 
However, these methods often compromise textual readability or require training specific encoders, which limits their applicability to open-ended generation tasks.
To address utility preservation, \citet{shetty2018a4nt} introduced A4NT to obfuscate author attributes, while \citet{mosallanezhad2019deep} utilized deep reinforcement learning to optimize the privacy-utility trade-off. 
These works laid the foundational mechanism for modern anonymizers.

\noindent\textbf{LLM-driven Adversarial Frameworks.}
With the rise of generative AI, \textbf{Feedback-guided Adversarial 
Anonymization (FgAA)} \cite{staab2024large} established the prevailing 
paradigm for text anonymization. It systematizes the LLM's 
duality, leveraging an attacker model to iteratively critique and refine 
the anonymizer's output. 
Following this paradigm, \textbf{SEAL} \cite{kim2025self} attempts to distill these capabilities into smaller models through SFT and DPO from multi-round teacher trajectories. 
However, this mechanism remains constrained by synthetic data scarcity \cite{yukhymenko2024synthetic} and potential generalization issues. 
\textbf{IncogniText} 
\cite{frikha2024incognitext} employs a misleading strategy, injecting false attributes to randomize attacker inferences rather than simply removing PII.
Beyond general-purpose anonymization, \textbf{RUPTA} \cite{yang2025robust} 
addresses a complementary problem: optimizing anonymized text for specific 
downstream tasks (\textit{e.g.}, occupation classification) using specialized task evaluators. 
While orthogonal to our focus on general semantic preservation, RUPTA shares the common reliance on API-based LLMs.
Crucially, a common limitation across these advanced frameworks is their reliance on strong LLMs via APIs or extensive training pipelines.
Moreover, methods following the greedy adversarial paradigm lack rational 
decision-making mechanisms, making them prone to utility collapse when 
deployed on smaller local models.

\section{Methodology}
\subsection{Threat Model}
\label{sec:threat_model}
RLAA is designed to defend against two distinct adversaries in the text anonymization task:

\noindent\textbf{Semi-honest Service Provider ($\mathcal{A}_{serv}$).}
This adversary represents a third-party entity that hosts remote LLM services. 
It causes the privacy paradox: to utilize superior anonymizers, users must first transmit raw sensitive text $x^{ori}$ to the provider. 
This transmission exposes users to risks such as unauthorized data retention for model training or commercial profiling \cite{smith2023identifying}.
Distinct from probable inference attacks, we characterize this threat as a deterministic data exposure, where $\mathcal{A}_{serv}$ gains full access to $x^{ori}$ upon submission.

\noindent\textbf{Re-identification Adversary ($\mathcal{A}_{re-id}$).}
This adversary represents any entity with access to the anonymized output $x^{ano}$.
We assume $\mathcal{A}_{re-id}$ employs a powerful LLM $\mathcal{M}_{atk}$ (\textit{e.g.}, DeepSeek-V3.2-Exp in our evaluation) to infer PII. 
The adversary's goal is to maximize the accuracy of the inferred attribute-value tuples $\{(a_j, v_j)\}$, where $a_j$ denotes a specific private attribute category (\textit{e.g.}, location) and $v_j$ represents its corresponding inferred value (\textit{e.g.}, "Paris, France"):
\begin{equation}
    \small
    \{(a_j, v_j)\} \leftarrow \mathcal{M}_{atk}(x^{ano})
\end{equation}

\subsection{Problem Formulation}
\label{sec:problem_formulation}
The text anonymization task can be formalized as a constrained transformation problem. 
Let $\mathcal{X}$ be the space of all text sequences. 
Given an original text $x^{(0)} \in \mathcal{X}$, let $\mathcal{A}_{priv} = \{(a_k, v_k^*)\}_{k=1}^K$ be the set of "private attribute-true value" pairs contained in $x^{(0)}$, where $a_k$ represents the attribute category (\textit{e.g.}, age, location) and $v_k^*$ denotes its true value. 
Let $\mathcal{M}_{atk}$ be a re-identification adversary. 
We define the privacy compromise metric $P_{asr}(x)$ as the adversary's success rate in inferring $\mathcal{A}_{priv}$ from a text $x$:
\begin{equation}
\small
\resizebox{0.9\linewidth}{!}{$
    P_{asr}(x) = \frac{1}{K} \sum_{k=1}^{K} \mathbb{I}\left( \hat{v}_k \approx v^*_k \mid (a_k, \hat{v}_k) \in \mathcal{M}_{atk}(x) \right)
$}
\end{equation}
where $\mathbb{I}(\cdot)$ is the indicator function checking if the inferred value $\hat{v}_k$ semantically aligns with the ground truth $v_k^*$. 
This fuzzy matching operator ($\approx$) accounts for the inherent linguistic variability in LLM-generated inferences, ensuring the evaluation captures semantic equivalence rather than strict string identity.

Simultaneously, let $U(x^{(0)}, x): \mathcal{X} \times \mathcal{X} \to \mathbb{R}$ be a utility function measuring semantic preservation, where a higher value indicates better utility.
The ideal objective is to find an optimum $x^*$ such that privacy is protected while utility is maximized:
\begin{equation}
    \max \quad U(x^{(0)}, x) \quad \text{s.t.} \quad P_{asr}(x) \le \delta
    \label{eq:problem_formulation}
\end{equation}
where $\delta$ denotes the upper bound on acceptable privacy risk.

Migrating the framework to a local environment naturally eliminates the service provider threat $\mathcal{A}_{serv}$.
However, achieving the objective in Eq. \ref{eq:problem_formulation} is non-trivial. 
Existing adversarial frameworks typically approximate the optimum $x^*$ via an iterative sequence $x^{(0)} \to x^{(1)} \dots \to x^{(T)}$, where each step attempts to reduce privacy risk.
Our empirical investigation reveals a critical challenge: directly applying greedy adversarial iterations with LSMs in this process often leads to utility collapse.
We hypothesize that this failure arises because naive greedy strategies lack a mechanism to evaluate the cost-effectiveness of privacy gains.
To formally address this, we reframe this iterative process over steps $t = 1, 2, \dots T$ through an economic lens, viewing each anonymization operation as a transaction.
The following marginal metrics are defined to quantify the rationality of each update.

\begin{definition}[\textbf{Marginal Privacy Gain, MPG}]
The reduction in adversarial inference accuracy achieved by the transformation at step $t$:
\begin{equation}
    \small
    \Delta P_t = P_{asr}(x^{(t-1)}) - P_{asr}(x^{(t)})
\end{equation}
where $x^{(t-1)}$ represents the intermediate text generated at the previous iteration.
\end{definition}

\begin{definition}[\textbf{Marginal Utility Cost, MUC}]
The semantic loss incurred at step $t$:
\begin{equation}
    \small
    \Delta C_t = U(x^{(0)}, x^{(t-1)}) - U(x^{(0)}, x^{(t)})
\end{equation}
\end{definition}

\begin{definition}[\textbf{Marginal Rate of Substitution, MRS}]
The instantaneous price of privacy preservation, representing the utility cost per unit of privacy gained:
\begin{equation}
\label{eq:mrs_def}
    \small
    \text{MRS}_t = \frac{\Delta C_t}{\Delta P_t}
\end{equation}
\end{definition}

From this perspective, a rational framework should ideally align with the principle that $\text{MRS}_t \le \lambda$, where $\lambda$ represents the maximum rational utility cost per unit of privacy.
Naive greedy strategies inherently tend to violate this condition and drift into irrational states of deadweight loss where $\text{MRS}_t \to \infty$.
Guided by this insight, RLAA is proposed to implicitly enforce this budget constraint $\lambda$ through architectural rationality, thereby structurally preventing utility collapse.
Detailed theoretical analysis is shown in Appendix \ref{app:theoretical_analysis}.

\begin{figure*}[t]
\centering
\includegraphics[width=\textwidth]{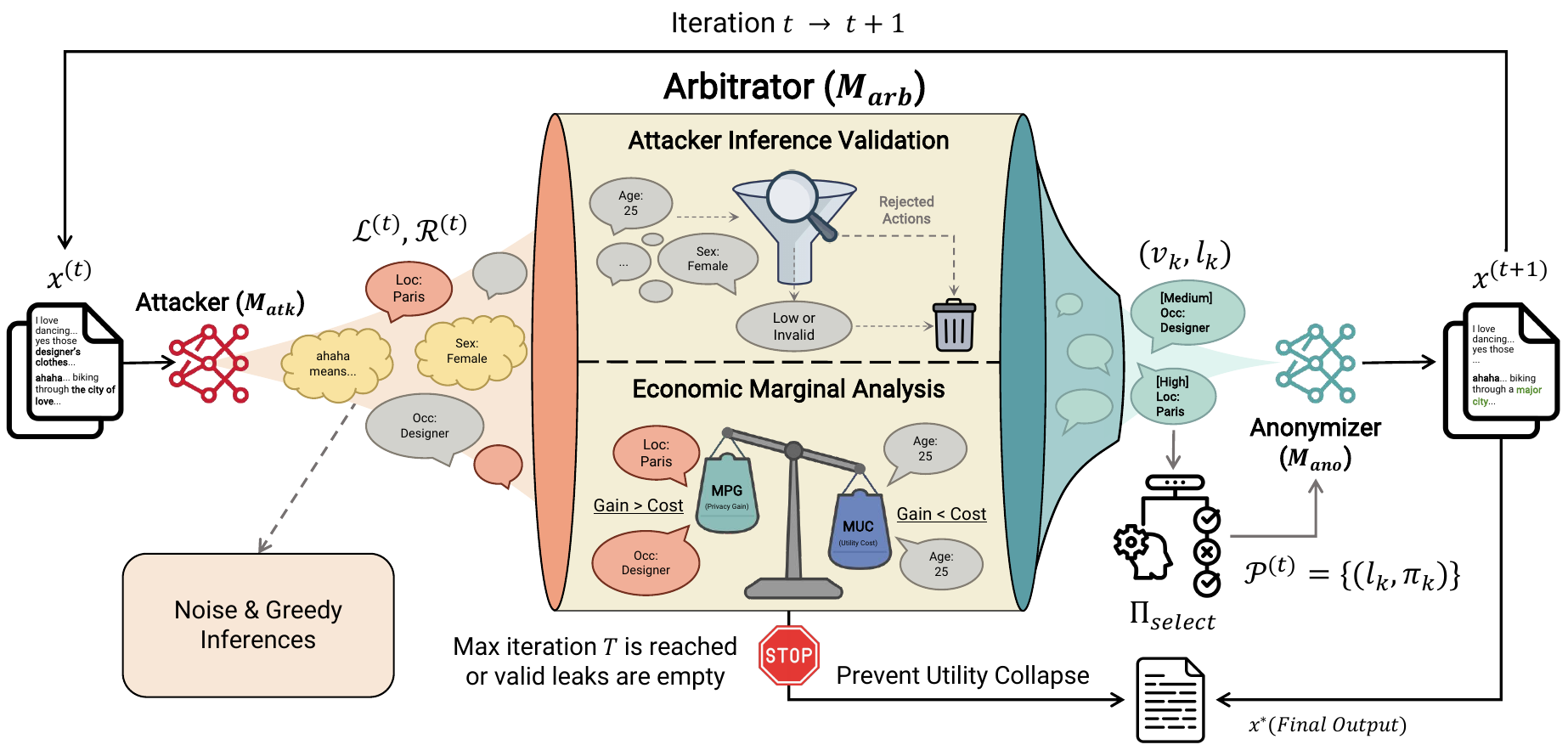}
\caption{\textbf{The RLAA Framework.} Utilizing an Attacker-Arbitrator-Anonymizer architecture, the arbitrator acts as a rationality gatekeeper. 
It validates attacker inferences to filter out ghost leaks with negligible privacy benefits, structurally preventing utility collapse caused by irrational greedy strategies.}
\label{fig:rlaa_framework}
\vspace{-0.2cm}
\end{figure*}

\subsection{The RLAA Framework}
\label{sec:rlaa_framework}
To address the problem mentioned above, we propose the Attacker-Arbitrator-Anonymizer (A-A-A) architecture illustrated in Figure \ref{fig:rlaa_framework}, which operates iteratively to refine the text $x^{(t)}$. 
For detailed algorithm settings and pseudo-code, please refer to Appendix \ref{subapp:algorithmic_procedure}.

\subsubsection{The Attacker ($M_{atk}$)}
The attacker acts as the sensory module. 
Given the current text $x^{(t)}$, it infers potential PII attributes and provides a reasoning chain:
\begin{equation}
    \small
    (\mathcal{L}^{(t)}, \mathcal{R}^{(t)}) = M_{atk}(x^{(t)})
\end{equation}
where $\mathcal{L}^{(t)}$ is the set of identified leaks and $\mathcal{R}^{(t)}$ is the set of corresponding inferences. 

\subsubsection{The Arbitrator ($M_{arb}$)}
The arbitrator functions as the central control module that regulates anonymization decisions considering the rationality constraints defined in Section \ref{sec:problem_formulation}. 
It includes a generative LSM backbone for logic validation and a deterministic control layer that parses validity signals and enforces the filtering process.
Instead of explicitly optimizing marginal trade-offs, which is unreliable given LSM's limited sensitivity to fine-grained scalar signals \cite{sun2025numerical}, the arbitrator validates attacker inferences by constraining the LSM to a structured discrimination task. 
Leveraging the greater reliability of verification over generation \cite{guan2024language}, this design enables self-correction and blocks ghost leaks from driving utility collapse.

For each leak $l_k \in \mathcal{L}^{(t)}$, the arbitrator assigns a validity level $v_k \in \mathcal{V}$, where the full validity space is $\mathcal{V} = \{\textsc{High}, \textsc{Med}, \textsc{Low}, \textsc{Invalid}\}$.
From the economic perspective defined in Section \ref{sec:problem_formulation}, this validity level acts as a discrete estimator of MPG. 
We logically partition $\mathcal{V}$ into the valid set $\mathcal{V}_{valid}$ (representing $\Delta P_t > 0$) and the ghost set $\mathcal{V}_{ghost}$ (representing $\Delta P_t \approx 0$), and the detailed partition is provided in Appendix \ref{subapp:algorithmic_procedure}.
Accordingly, we define the selection policy $\Pi_{select}$ as:
\begin{equation}
\small
\resizebox{0.8\linewidth}{!}{$
    \Pi_{select}(v_k) = 
    \begin{cases} 
    \textsc{Execute} & \text{if } v_k \in \mathcal{V}_{valid} \\
    \textsc{Ignore} & \text{if } v_k \in \mathcal{V}_{ghost}
    \end{cases}
    \label{eq:policy_select}
$}
\end{equation}
The \textsc{Execute} branch captures leakage with significant returns, ensuring a finite MRS.
Conversely, the \textsc{Ignore} branch filters out ghost leaks with $\Delta P_t \to 0$.
By rejecting these transactions, the arbitrator prevents the system from drifting into the MRS singularity identified in Eq. \ref{eq:mrs_def}, thereby structurally prevents utility collapse.

\subsubsection{The Anonymizer ($M_{ano}$)}
The anonymizer executes the refined policy $\mathcal{P}^{(t)}$ with the actionable leaks validated by the arbitrator:
\begin{equation}
\small
\resizebox{0.9\linewidth}{!}{$
    \mathcal{P}^{(t)} = \{ (l_k, \pi_k) \mid l_k \in \mathcal{L}^{(t)}, \Pi_{select}(v_k) = \textsc{EXECUTE} \}
$}
    \label{eq:policy_set}
\end{equation}
Based on this set, the text update is governed by:
\begin{equation}
\small
\resizebox{0.75\linewidth}{!}{$
    x^{(t+1)} = 
    \begin{cases} 
    M_{ano}(x^{(t)}, \mathcal{P}^{(t)}) & \text{if } \mathcal{P}^{(t)} \neq \emptyset \\
    x^{(t)} & \text{if } \mathcal{P}^{(t)} = \emptyset
    \end{cases}
$}
    \label{eq:anonymizer_update}
\end{equation}
When $\mathcal{P}^{(t)}$ is empty, the system triggers the early stop branch. 
This step guarantees the iteration converges to the fixed point $x^{(t+1)} = x^{(t)}$, thereby preventing the utility collapse caused by diminishing returns.

\section{Experiments}
\subsection{Experimental Setup}
\label{sec:exp_setup}
\noindent\textbf{Datasets.}
Two benchmark datasets are employed to evaluate our approach. 
\textbf{PersonalReddit} \cite{staab24beyond} is a  widely adopted dataset in research including IncogniText and RUPTA,
which consists of 525 human-verified synthetic Reddit conversations annotated with 8 fine-grained private attributes (\textit{e.g.}, age, gender and location). 
\textbf{reddit-self-disclosure} \cite{dou2024reducing} serves as a real-world evaluation bed, from which 885 samples containing self-disclosures of health conditions are extracted. 
Since this dataset is inherently labeled, we utilize the original labels as ground truth and perform manual verification to ensure consistency.

\noindent\textbf{Models.}
Our experiments involve two distinct model roles: a local model for the anonymization process and an external evaluation model for benchmarking privacy and utility.
We employ \textbf{Llama3-8B} and \textbf{Qwen2.5-7B} as local backbones, both of which facilitate consumer-grade deployment by requiring only approximately 4GB VRAM under 4-bit quantization.
To evaluate privacy, we employ the 685B \textbf{DeepSeek-V3.2-Exp} as a robust re-identification adversary ($\mathcal{A}_{re-id}$), following the threat model defined in Section \ref{sec:threat_model}.
Unlike proprietary APIs (\textit{e.g.}, GPT-4), which impose strict cost and access constraints, DeepSeek enables large-scale and strong attacks at negligible cost due to its open-source nature, making it particularly suitable for modeling an economically rational adversary operating under realistic resource constraints.
However, to mitigate the risk of self-referential evaluation from relying on a single backbone, we additionally re-evaluate the local methods using GPT-4o as an independent evaluator.
Detailed results are reported in Appendix \ref{app:gpt4o_eval}.

\noindent\textbf{Baseline Methods.}
We compared the performance of RLAA against \textbf{FgAA} \cite{staab2024large} (including its \textbf{Naive} and \textbf{SFT} migration variants), \textbf{SEAL} \cite{kim2025self}, \textbf{IncogniText} \cite{frikha2024incognitext} and \textbf{DP-BART-PR+} \cite{igamberdiev2023dp}.
The maximum adversarial iteration limit $T$ was configured according to dataset complexity and the need to evaluate varying convergence horizons: we set $T=10$ for the multi-attribute PersonalReddit and $T=3$ for the single-attribute reddit-self-disclosure.
Detailed configurations are provided in Appendix \ref{subapp:base_env} and \ref{subapp:gen_params}.

\subsection{Evaluation Metrics}
\label{sec:metrics}
To evaluate the performance of all methods, we adopt the standardized privacy and utility evaluation protocols widely established in recent LLM-based anonymization research \cite{staab2024large, frikha2024incognitext, kim2025self}. 
Let $D = \{(x_i^{ori}, A_i)\}_{i=1}^N$ be the test dataset, where $x_i^{ori}$ is the original text and $A_i$ is the set of ground-truth private attributes. 
Let $x_i^{ano}$ be the anonymized text generated by a given method.

\noindent\textbf{Privacy (PRIV).}
The performance on privacy preservation is evaluated using the anonymization under LLM inference setting \cite{staab2024large}. 
We use a powerful adversary model $\mathcal{M}_{atk}$ (DeepSeek-V3.2-Exp) to infer the set of attributes $A'_i = \mathcal{M}_{atk}(x_i^{ano})$ from the anonymized text. 
The score is the average attack success rate over all attributes $K$ across all $N$ samples, which employs a programmatic matcher detailed in Appendix \ref{subapp:privacy_eval}. 
\begin{equation}
    \small
    \text{PRIV} = \frac{1}{N \cdot K} \sum_{i=1}^N \sum_{k=1}^K \mathbb{I}(A'_{i,k} \approx A_{i,k}) 
\end{equation}

\begin{table*}[t!]
\centering
\small
\renewcommand{\arraystretch}{1.15}
\setlength{\tabcolsep}{1.5pt}
\resizebox{\textwidth}{!}{
\begin{tabular}{ c @{\hspace{8pt}} l c YYYYYYY }
\toprule
\multicolumn{2}{c}{\textbf{Method}} & \textbf{Base Model} & 
\textbf{UTIL} $\uparrow$ & \textbf{PRIV} $\downarrow$ & \textbf{ROUGE} $\uparrow$ & \textbf{BLEU} $\uparrow$ & \textbf{MEAN} $\uparrow$ & \textbf{READ} $\uparrow$ & \textbf{HALL} $\uparrow$ \\
\midrule
\multicolumn{10}{c}{\textbf{\texttt{PersonalReddit}}} \\
\midrule
\multicolumn{2}{c}{Original Text} & - & 
1.0000 & 0.4442 & 1.0000 & 1.0000 & 10.000 & 10.000 & 1.0000 \\
\midrule
\multirow{4}{*}{\rotatebox[origin=c]{90}{\scriptsize \shortstack{\textbf{Local}\\\textbf{Methods}}}} 
 & DP-BART+ & BART-Base & 0.3470 & 0.2650 & \underline{0.3925} & \underline{0.2597} & 2.6610 & 7.5790 & 0.0170 \\
 & IncogniText & Llama3-8B & 0.6330 & \textbf{0.1230} & 0.3499 & 0.2304 & \underline{5.5040} & \textbf{10.0000} & 0.3470 \\
 & FgAA-Naive & Llama3-8B & \underline{0.7297} & \underline{0.1948} & 0.2180 & 0.0533 & 3.5420 & 8.9330 & \underline{0.9420} \\
 & \textbf{RLAA} & Llama3-8B & \textbf{0.8788} & 0.2130 & \textbf{0.5958} & \textbf{0.4251} & \textbf{7.0780} & \underline{9.8090} & \textbf{0.9480} \\
\midrule
\multirow{3}{*}{\rotatebox[origin=c]{90}{\scriptsize \shortstack{\textbf{API}\\\textbf{Required}}}}
 & SEAL & Llama3-8B & 0.4642 & 0.1787 & 0.1205 & 0.0753 & 9.2645 & 1.5207 & 0.3140 \\
 & FgAA-SFT & Llama3-8B & 0.9670 & 0.2940 & 0.9149 & 0.9389 & 9.2310 & 9.9340 & 0.9830 \\
 & FgAA-API & DeepSeek-V3.2-Exp & 0.8264 & 0.2056 & 0.4649 & 0.2082 & 5.4380 & 9.3550 & 1.0000 \\
\midrule
\multicolumn{10}{c}{\textbf{\texttt{reddit-self-disclosure}}} \\
\midrule
\multicolumn{2}{c}{Original Text} & - & 
1.0000 & 0.4943 & 1.0000 & 1.0000 & 10.000 & 10.000 & 1.0000 \\
\midrule
\multirow{4}{*}{\rotatebox[origin=c]{90}{\scriptsize \shortstack{\textbf{Local}\\\textbf{Methods}}}} 
 & DP-BART+ & BART-Base & 0.3999 & 0.3245 & 0.2565 & 0.1093 & 2.3698 & 8.6830 & 0.0943 \\
 & IncogniText & Llama3-8B & 0.7755 & \underline{0.1283} & \textbf{0.8584} & \textbf{0.8781} & \textbf{7.0792} & \textbf{9.9585} & 0.6226 \\
 & FgAA-Naive & Llama3-8B & \underline{0.8187} & 0.1591 & 0.4919 & 0.2805 & 5.3523 & 9.6629 & \textbf{0.9545} \\
 & \textbf{RLAA} & Llama3-8B & \textbf{0.8572} & \textbf{0.1136} & \underline{0.6016} & \underline{0.4703} & \underline{6.8939} & \underline{9.6932} & \underline{0.9129} \\
\midrule
\multirow{3}{*}{\rotatebox[origin=c]{90}{\scriptsize \shortstack{\textbf{API}\\\textbf{Required}}}}
 & SEAL & Llama3-8B & 0.6303 & 0.0226 & 0.1213 & 0.0736 & 9.8377 & 2.2415 & 0.6830 \\
 & FgAA-SFT & Llama3-8B & 0.9218 & 0.1925 & 0.7813 & 0.7333 & 8.0528 & 9.9019 & 0.9698 \\
 & FgAA-API & DeepSeek-V3.2-Exp & 0.9118 & 0.1660 & 0.7573 & 0.6626 & 7.5210 & 9.8720 & 0.9960 \\
\bottomrule
\end{tabular}
}
\caption{\textbf{Main Baseline Comparison.} Performance of RLAA against local anonymization baselines (DP-BART+, IncogniText and FgAA-Naive) and baselines that require external API access during training, supervision, or some stage of the pipeline (SEAL, FgAA-SFT and FgAA-API).}
\label{tab:comprehensive_baseline_comparison}
\end{table*}

\begin{figure*}[t!]
\centering
\includegraphics[width=0.75\textwidth]{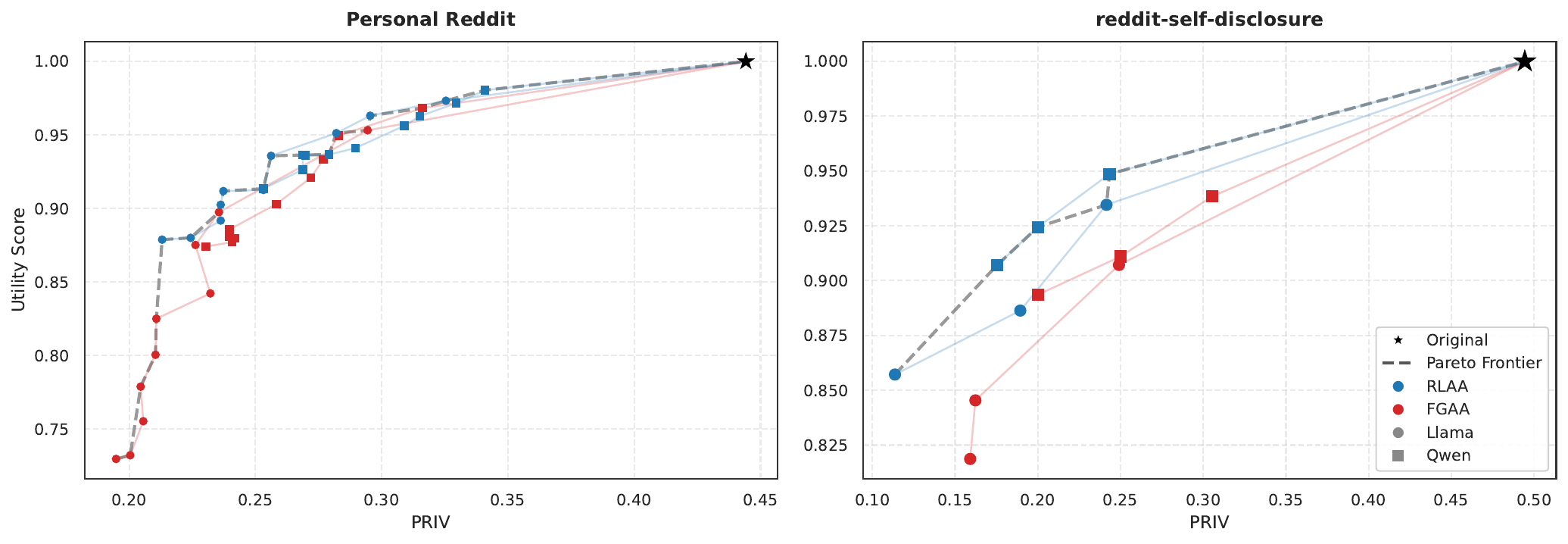}
\caption{\textbf{Privacy-Utility Trade-off.} RLAA achieves superior trade-offs compared to FgAA across iterations on two datasets. The trade-off dynamics for structural metrics (ROUGE/BLEU) are detailed in Appendix \ref{app:traditional_trade_off}.}
\label{fig:util_trade_off}
\vspace{-0.2cm}
\end{figure*}

\begin{table*}[t!]
\small
\centering
\setlength{\tabcolsep}{3pt}
\begin{tabular}{l c Y Y Y Y Y Y Y} 
\toprule
\textbf{Base Model} & \textbf{Method} & \textbf{UTIL} $\uparrow$ & \textbf{PRIV} $\downarrow$ & \textbf{ROUGE} $\uparrow$ & \textbf{BLEU} $\uparrow$ & \textbf{MEAN} $\uparrow$ & \textbf{READ} $\uparrow$ & \textbf{HALL} $\uparrow$ \\
\midrule
\multicolumn{9}{c}{\textbf{\texttt{PersonalReddit}}} \\
\midrule
\multicolumn{1}{l}{Original Text} & - & 1.0000 & 0.4442 & 1.0000 & 1.0000 & 10.0000 & 10.0000 & 1.0000 \\
\cmidrule{1-9}
\textbf{DeepSeek-V3.2-Exp}
& w/o Arb. & 0.8264 & \textbf{0.2056} & 0.4649 & 0.2082 & 5.4380 & 9.3550 & \textbf{1.0000} \\
& \textbf{RLAA} & \textbf{0.9240} & 0.2087 & \textbf{0.7401} & \textbf{0.6365} & \textbf{7.8350} & \textbf{9.9670} & 0.9920 \\
\cmidrule{1-9}
\textbf{Llama3-8B}
& w/o Arb. & 0.7297 & \textbf{0.1948} & 0.2180 & 0.0533 & 3.5420 & 8.9330 & 0.9420 \\
& \textbf{RLAA} & \textbf{0.8788} & 0.2130 & \textbf{0.5958} & \textbf{0.4251} & \textbf{7.0780} & \textbf{9.8090} & \textbf{0.9480} \\
\cmidrule{1-9}
\textbf{Qwen2.5-7B}
& w/o Arb. & 0.8741 & \textbf{0.2302} & 0.6156 & 0.3973 & 6.8590 & 9.6120 & \textbf{0.9750} \\
& \textbf{RLAA} & \textbf{0.9135} & 0.2531 & \textbf{0.7549} & \textbf{0.6413} & \textbf{7.8020} & \textbf{9.8620} & 0.9740 \\
\midrule
\multicolumn{9}{c}{\textbf{\texttt{reddit-self-disclosure}}} \\
\midrule
\multicolumn{1}{l}{Original Text} & - & 1.0000 & 0.4943 & 1.0000 & 1.0000 & 10.0000 & 10.0000 & 1.0000 \\
\cmidrule{1-9}
\textbf{DeepSeek-V3.2-Exp}
& w/o Arb. & 0.9118 & 0.1660 & 0.7573 & 0.6626 & 7.5210 & 9.8720 & \textbf{0.9960} \\
& \textbf{RLAA} & \textbf{0.9132} & \textbf{0.1434} & \textbf{0.8070} & \textbf{0.7364} & \textbf{7.5620} & \textbf{9.9090} & 0.9920 \\
\cmidrule{1-9}
\textbf{Llama3-8B}
& w/o Arb. & 0.8187 & 0.1591 & 0.4919 & 0.2805 & 5.3523 & 9.6629 & \textbf{0.9545} \\
& \textbf{RLAA} & \textbf{0.8572} & \textbf{0.1136} & \textbf{0.6016} & \textbf{0.4703} & \textbf{6.8939} & \textbf{9.6932} & 0.9129 \\
\cmidrule{1-9}
\textbf{Qwen2.5-7B}
& w/o Arb. & 0.8936 & 0.2000 & 0.6458 & 0.4800 & 7.2115 & 9.8269 & \textbf{0.9769} \\
& \textbf{RLAA} & \textbf{0.9071} & \textbf{0.1753} & \textbf{0.7273} & \textbf{0.6208} & \textbf{7.7200} & \textbf{9.9320} & 0.9560 \\
\bottomrule
\end{tabular}
\caption{\textbf{Ablation and Generalization}. Comparison between the full RLAA framework and the baseline without arbitrator. The arbitrator consistently improves utility across different model scales and datasets.}
\label{tab:pairwise_comparison}
\end{table*}

\begin{figure*}[t!]
    \centering
    \includegraphics[width=0.49\linewidth]{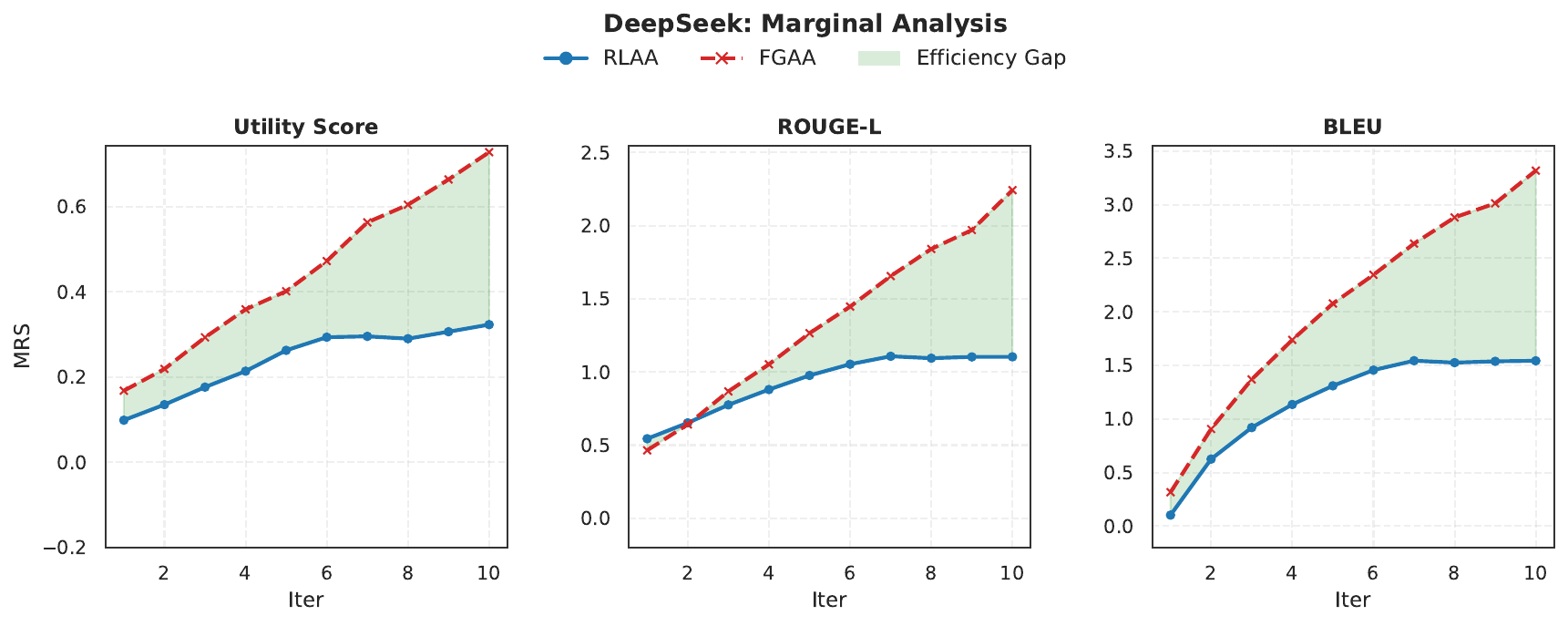}
    \includegraphics[width=0.49\linewidth]{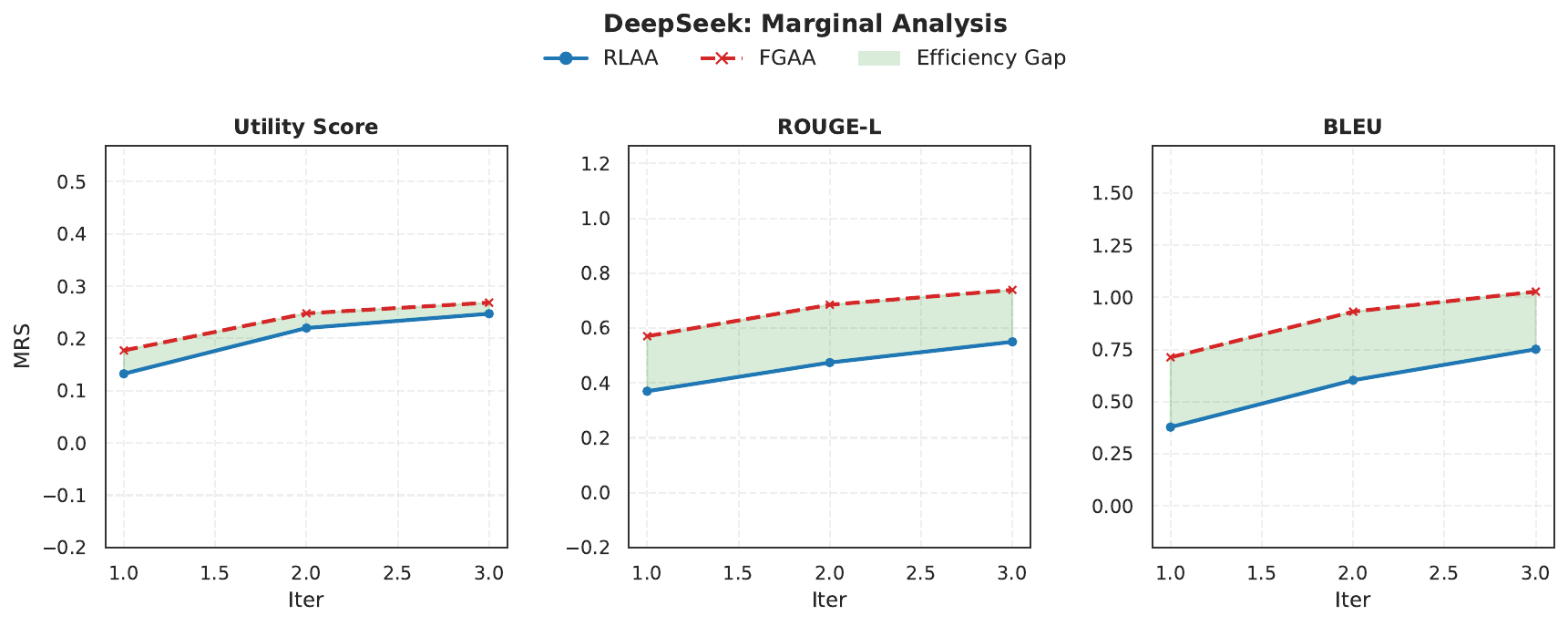}
    \includegraphics[width=0.49\linewidth]{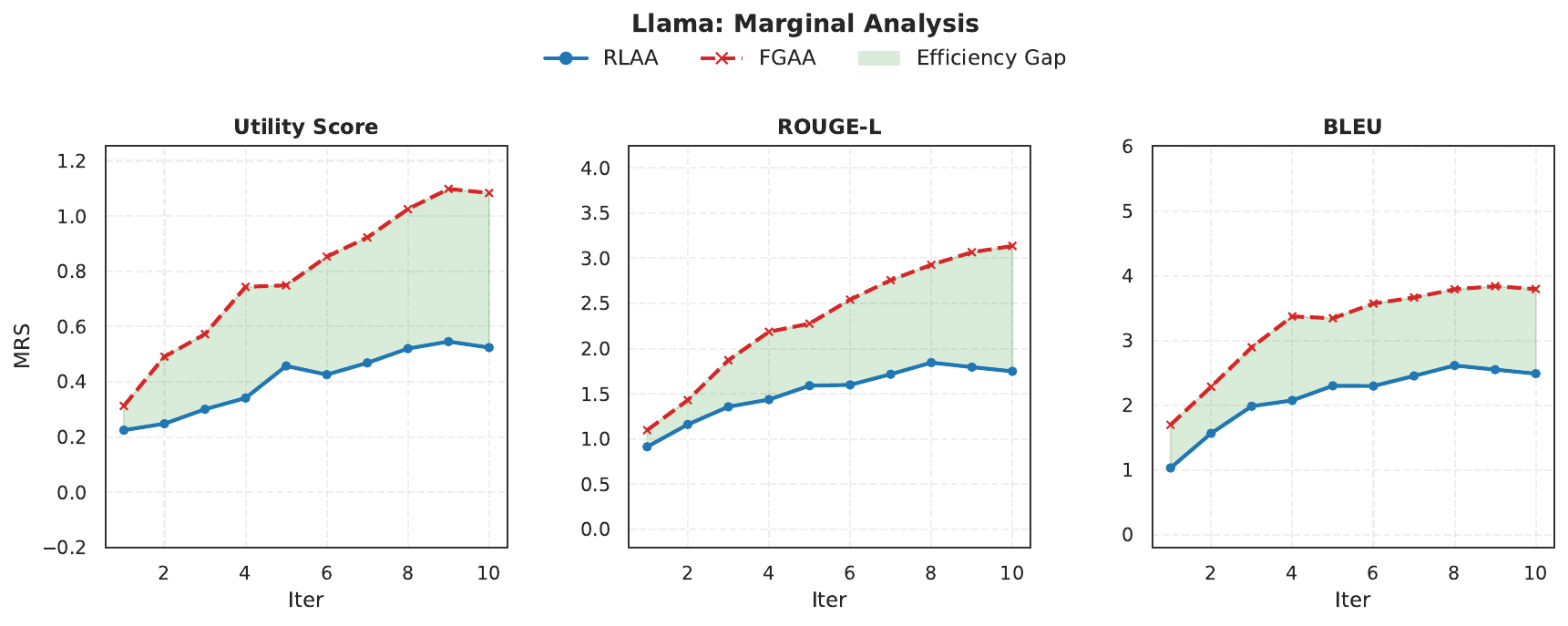}
    \includegraphics[width=0.49\linewidth]{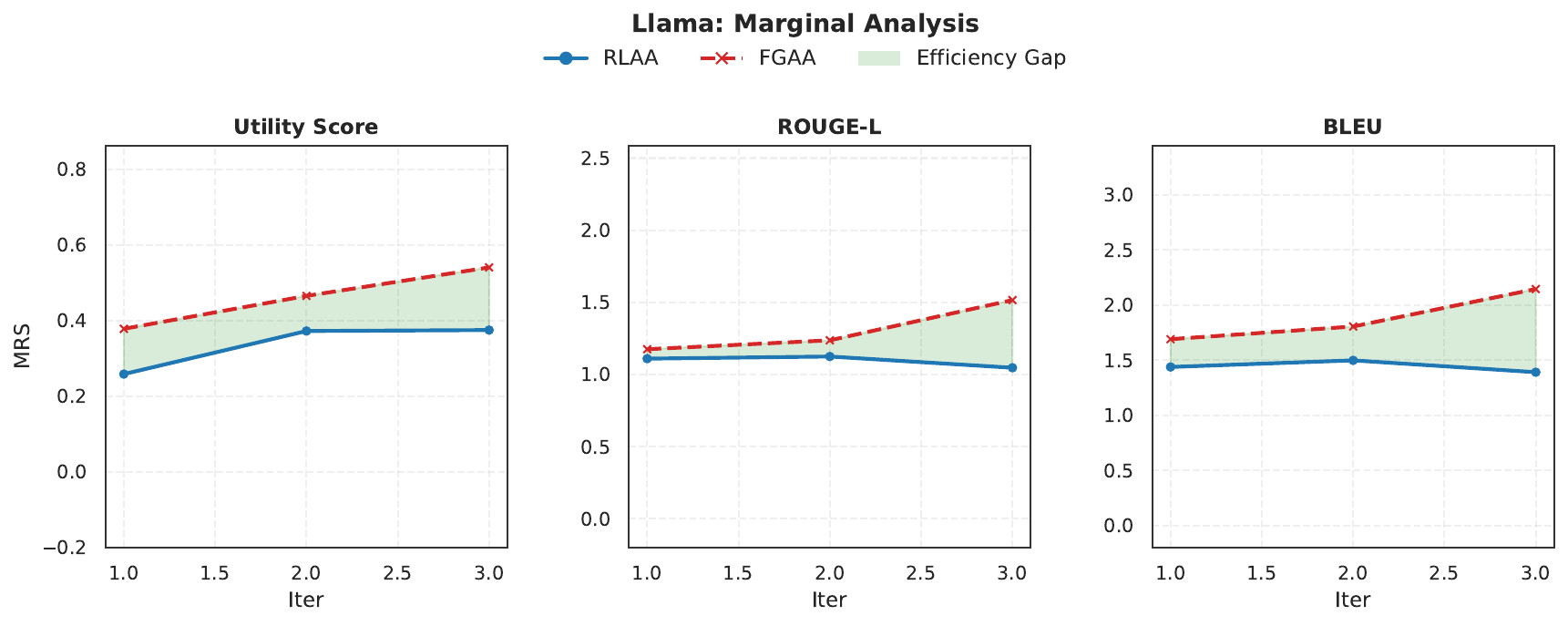}
    \includegraphics[width=0.49\linewidth]{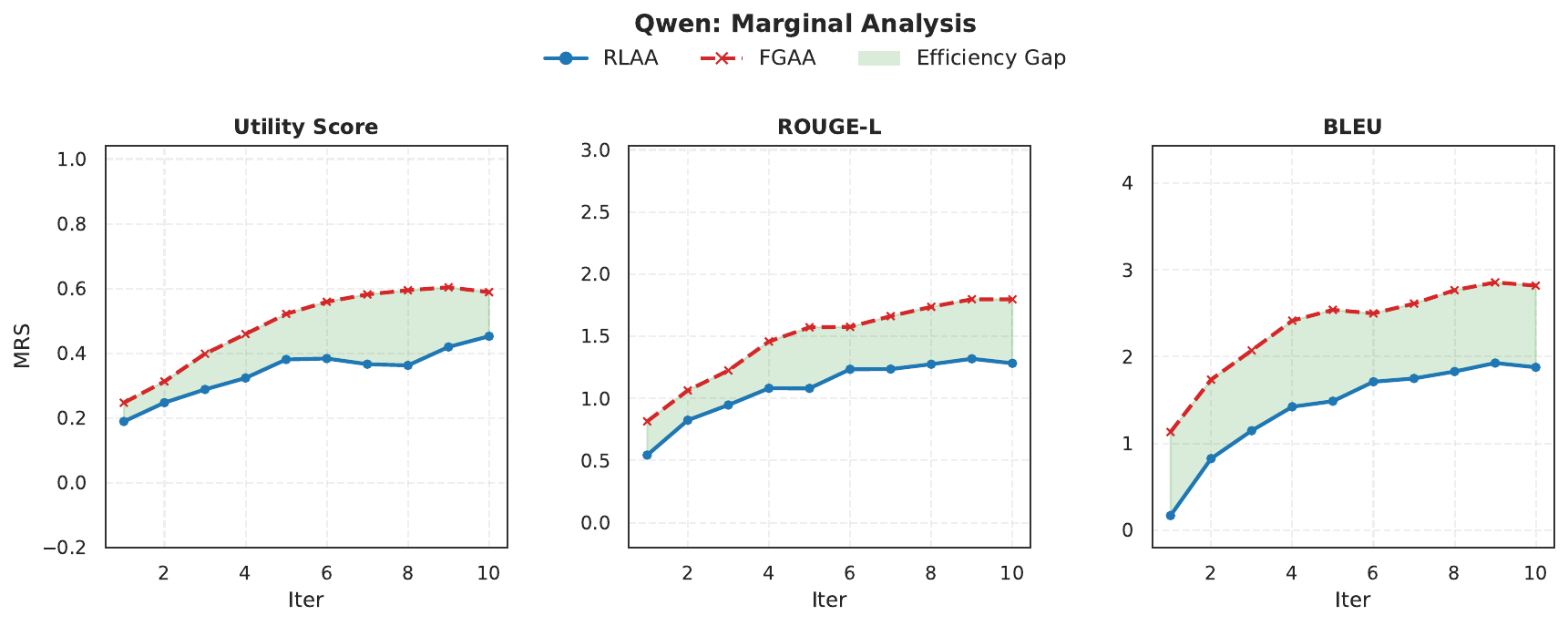}
    \includegraphics[width=0.49\linewidth]{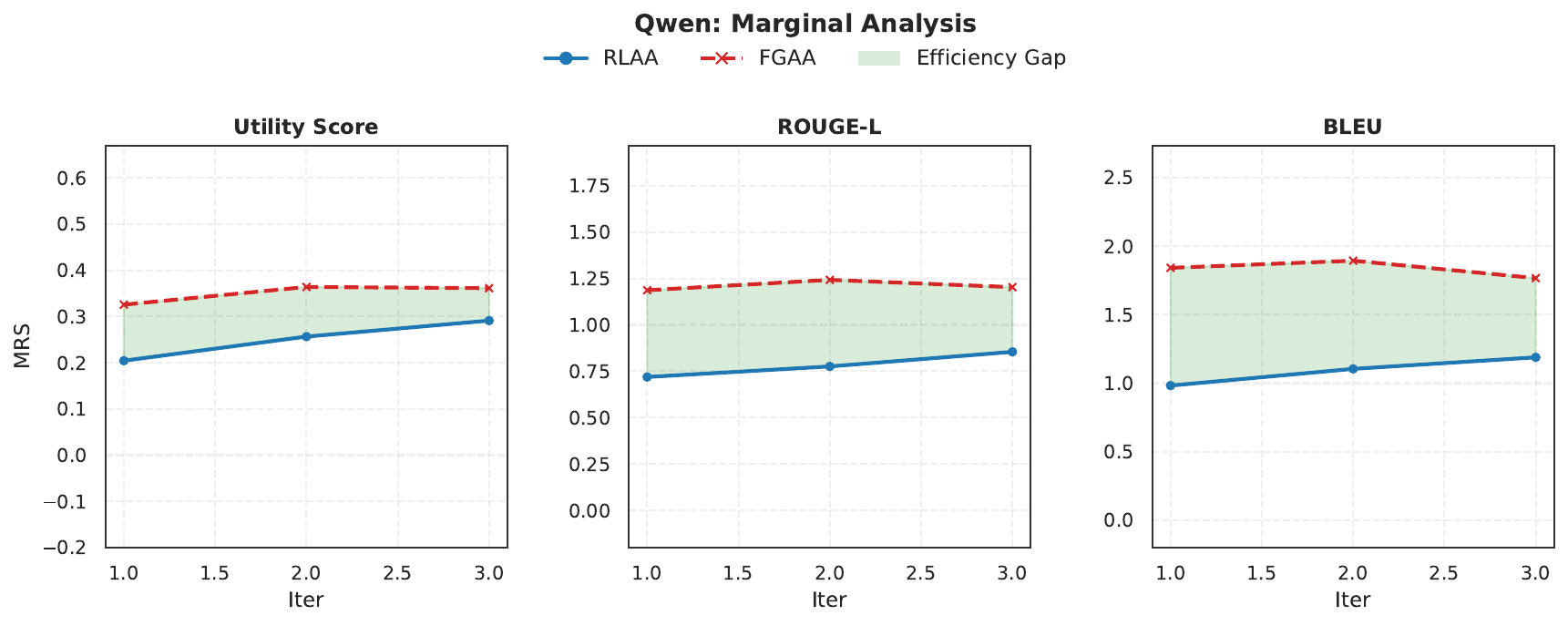}
    \caption{\textbf{Cumulative MRS Analysis.} 
    The figure displays the cumulative MRS during the anonymization process on PersonalReddit (Left) and reddit-self-disclosure (Right). FgAA (Red) shows a sustained increase while RLAA (Blue) maintains a stable low MRS. Detailed quantitative economic analysis is provided in Appendix \ref{subapp:quantitative_analysis}.}
    \label{fig:mrs_analysis}
    \vspace{-0.2cm}
\end{figure*}

\noindent\textbf{Utility (UTIL).}
The utility evaluation relies on an LLM-as-a-Judge $\mathcal{M}_{judge}$ (DeepSeek-V3.2-Exp) \cite{zhang2024small} to assess semantic preservation via $K$ distinct components. 
We denote the scalar score range as $[S_{\min}, S_{\max}]$:
\noindent \textbf{(1) Readability ($s_{\text{read}}$)} is a scalar score $\in [S_{\min}, S_{\max}]$ assessing if the text is understandable on its own, defined as $s_{\text{read},i} = \mathcal{M}_{judge}(x_i^{ano})$.
\noindent \textbf{(2) Meaning ($s_{\text{mean}}$)} is a scalar score $\in [S_{\min}, S_{\max}]$ assessing if the core message is preserved, where $s_{\text{mean},i} = \mathcal{M}_{judge}(x_i^{ori}, x_i^{ano})$.
\noindent \textbf{(3) Hallucinations ($s_{\text{hall}}$)} is a binary score $s_{\text{hall},i} = \mathbb{I}(\text{no new ungrounded info})$
The final score is the average of these $K$ normalized components ($K=3, S_{\min}=1, S_{\max}=10$):
\begin{equation}
\label{eq:util}
    \small
    \text{UTIL} = \frac{1}{K \cdot N} \sum_{i=1}^N \left( \frac{s_{\text{read},i}}{S_{\max}} + \frac{s_{\text{mean},i}}{S_{\max}} + s_{\text{hall},i} \right)
\end{equation}

\noindent\textbf{Traditional Structural Metrics.}
In addition to LLM-based \textbf{UTIL Score}, we report standard n-gram metrics to evaluate structural similarity:
\textbf{ROUGE} denotes the ROUGE-L F1 Score \citep{lin2004rouge} based on the longest common subsequence, and \textbf{BLEU} \citep{papineni2002bleu} indicates the bilingual evaluation understudy score for precision.

\subsection{Performance Analysis}
\label{sec:performance_analysis}
Our analysis begins by establishing RLAA's empirical superiority against strong baselines in Section \ref{sec:baseline_compare} and verifying the arbitrator's crucial role through ablation studies in Section \ref{sec:ablation_study}. 
The empirical findings are then bridged with our theoretical model via an economic analysis in Section \ref{sec:economic_analysis}.
Finally, a human study in addition to automatic evaluations is presented in Section \ref{sec:human_eval}.

\subsubsection{Baseline Comparison Results}
\label{sec:baseline_compare}
Table \ref{tab:comprehensive_baseline_comparison} presents the comparative analysis on the PersonalReddit and reddit-self-disclosure datasets.
RLAA achieves the optimal privacy-utility balance among local methods (UTIL=0.8788/0.8572).
In contrast, FgAA-Naive experiences a pronounced utility collapse due to greedy over-editing. 
Figure \ref{fig:util_trade_off} visualizes this performance gap across different horizons: FgAA exhibits progressive utility degradation consistent with diminishing returns on PersonalReddit, while RLAA maintains a superior Pareto frontier from the outset on reddit-self-disclosure.
Besides, IncogniText sacrifices significant semantics (HALL=0.3470) by fabricating attributes that introduce ungrounded and potentially misleading contents.
Distillation-based methods reveal a critical failure mode,
\textit{i.e.},
SEAL achieves strong privacy (PRIV=0.1787/0.0226) but catastrophic utility loss 
(UTIL=0.4642/0.6303). 
We argue this stems from distilling multi-round teacher 
trajectories without early-stopping signals: the student inherits 
privacy-seeking aggressiveness but never learns to stop. 
Conversely, our FgAA-SFT variant (detailed in Appendix 
\ref{subapp:training_convergence}) corrects for over-caution (PRIV=0.2940/0.1925). 
It reveals distillation's fundamental limitation: it transfers editing behaviors but not the meta-judgment of rational stopping. In contrast, RLAA's training-free arbitrator structurally enforces this rationality, even rivaling FgAA-API (DeepSeek-685B) with 8B models.

\subsubsection{Ablation Study}
\label{sec:ablation_study}
To isolate the arbitrator's contribution, we treat FgAA as the ablation 
baseline without arbitrator \textbf{w/o Arb.} 
Table \ref{tab:pairwise_comparison} shows consistent utility gains across all models. 
Crucially, RLAA achieves Pareto dominance on reddit-self-disclosure, improving both privacy (0.1591→0.1136) and utility (0.8187→0.8572), which demonstrates that rationality constraints actively optimize rather than merely constrain.
Even the powerful DeepSeek-685B benefits from the arbitrator, 
which confirms again the irrationality also stems from the greedy strategy itself. 
This observation validates RLAA as a general component to compensate for the rationality of model behaviors, regardless of scales.

\subsubsection{Economic Efficiency Analysis}
\label{sec:economic_analysis}
To validate our theoretical premise regarding rationality, we analyze the marginal dynamics of the anonymization process.
Figure \ref{fig:mrs_analysis} visualizes the MRS across iterations for Utility Score.
FgAA (Red Curve) exhibits a continuous increase in MRS across both datasets, which confirms that the greedy strategy tends to drift into an economically inefficient state.
In contrast, RLAA (Blue Curve) maintains a low and stable MRS trajectory, reducing the terminal MRS based on UTIL Score from \textbf{3.80} to \textbf{1.74} on Llama3-8B.
The comparison serves as an empirical proof of our hypothesis: The constraint on rationality prevents utility collapse.
Besides, further quantitative analysis in Appendix \ref{subapp:quantitative_analysis} reveals a counter-intuitive capability-rationality paradox: SOTA level models exhibit a steeper irrationality drift and achieve higher rationality gains than smaller models.
This observation suggests that scaling capabilities alone cannot resolve rationality alignment failures.

\subsubsection{Human Evaluation}
\label{sec:human_eval}
To further examine the semantic utility of RLAA's generated text and the arbitrator's reliability, we conduct two complementary human evaluations on PersonalReddit dataset.

\begin{figure}[h]
    \centering
    \includegraphics[width=0.45\textwidth]{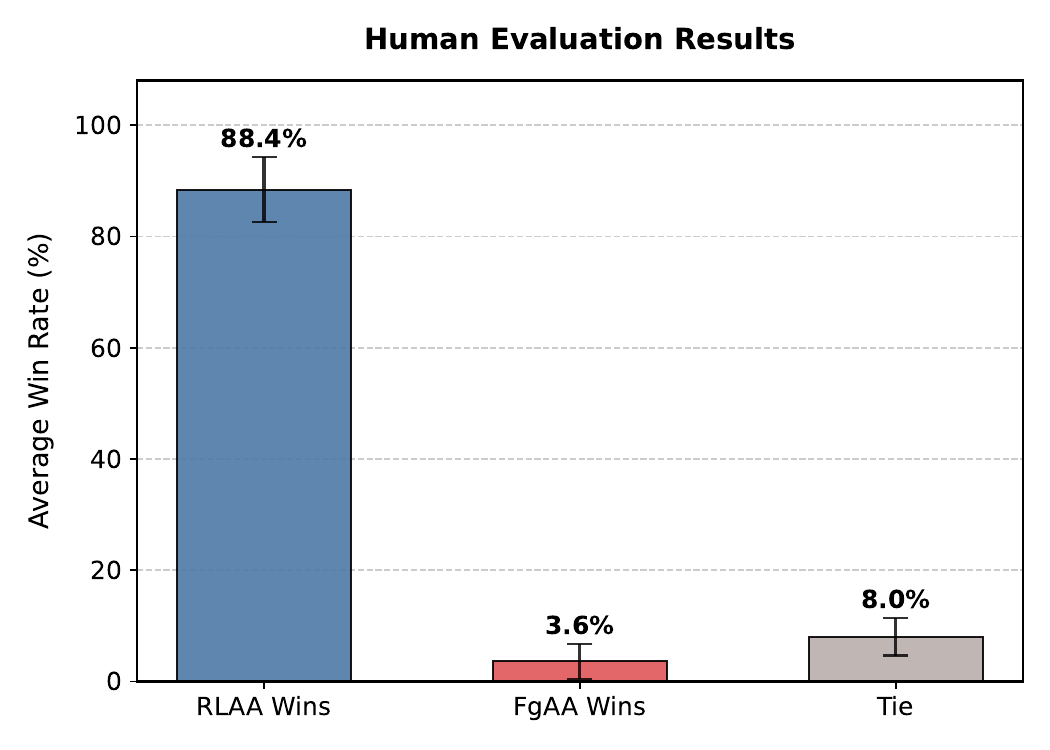}
    \caption{\textbf{Human pairwise evaluation results.} RLAA achieves a dominant win rate against FgAA-Naive while reflecting high inter-annotator consistency.}
    \label{fig:human_eval}
\end{figure}

First, to directly assess semantic utility, we conduct a blind three-way pairwise comparison between the outputs of RLAA, FgAA-Naive and the original text.
The evaluation involves five independent non-author annotators, each judging 50 randomized samples, resulting in 250 total judgments.
As shown in Figure \ref{fig:human_eval}, RLAA achieved a dominant average win rate of \textbf{88.4\%}, while the FgAA's win rate was only \textbf{3.6\%}. 
RLAA's decisive preference rating substantiates that the arbitrator effectively preserves semantic integrity.
We also observed a strong consensus across all five reviewers with a remarkably low standard deviation in win rates ($\sigma=5.9\%$), which provides strong statistical evidence for inter-annotator consistency and ensures the objective reliability of these findings.

Second, to evaluate whether the arbitrator can reliably distinguish genuine leaks from ghost leaks, we conduct a human validation study on PersonalReddit with three annotators with NLP research backgrounds.
The annotated set contains 200 randomly sampled instances from anonymization trajectories for overall reliability evaluation and 240 additional samples stratified across validity tiers for tier-based calibration; majority vote is used as the final label.
Table \ref{tab:arb_reliability} shows that the arbitrator achieves high recall and relatively low false-negative rates across both Llama3-8B and Qwen2.5-7B, indicating that genuine leaks are rarely filtered out by early stopping.
To further assess calibration, we compute the proportion of human-confirmed genuine leaks within each validity tier.
As shown in Table \ref{tab:arb_tier_calibration}, this proportion decreases monotonically from \texttt{HIGH} to \texttt{INVALID} for both local backbones, supporting the interpretation that validity tiers provide an empirically grounded proxy for actionable privacy severity.

\begin{table}[t!]
\centering
\small
\renewcommand{\arraystretch}{1.05}
\setlength{\tabcolsep}{5pt}
\begin{tabular}{lcc}
\toprule
\textbf{Metric} & \textbf{Llama3-8B} & \textbf{Qwen2.5-7B} \\
\midrule
\textbf{Recall}    & 95.45\% & 85.71\% \\
\textbf{Precision} & 58.33\% & 60.00\% \\
\textbf{Accuracy}  & 84.00\% & 85.00\% \\
\textbf{FNR}       & 4.55\%  & 14.29\% \\
\textbf{FPR}       & 19.23\% & 15.19\% \\
\textbf{F1}        & 0.7241  & 0.7059 \\
\bottomrule
\end{tabular}
\caption{\textbf{Human-validated arbitrator reliability.} FNR and FPR denote false negative and false positive rates.}
\label{tab:arb_reliability}
\end{table}

\begin{table}[t!]
\centering
\small
\renewcommand{\arraystretch}{1.05}
\setlength{\tabcolsep}{5pt}
\begin{tabular}{lcc}
\toprule
\textbf{Tier} & \textbf{Llama3-8B} & \textbf{Qwen2.5-7B} \\
\midrule
\texttt{HIGH}    & 90.00\% & 90.00\% \\
\texttt{MED}     & 36.67\% & 53.33\% \\
\texttt{LOW}     & 20.00\% & 10.00\% \\
\texttt{INVALID} & 3.33\%  & 3.33\% \\
\bottomrule
\end{tabular}
\caption{\textbf{Tier-based calibration.} Entries are the proportion of human-confirmed genuine leaks in each tier.}
\label{tab:arb_tier_calibration}
\end{table}

\section{Conclusion}
\label{sec:conclusion}
This paper reframes utility collapse in localized adversarial anonymization as a failure of rational decision-making and proposes RLAA, a training-free framework built on an Attacker-Arbitrator-Anonymizer architecture. 
By validating attacker inferences and introducing verification-guided early stopping, RLAA prevents semantic utility collapse caused by hallucinations and diminishing returns. 
Extensive experiments show that RLAA consistently achieves a stronger privacy-utility trade-off than competitive baselines. 
Grounding safety in economic rationality, RLAA resolves the privacy paradox, aligning safety with model capability.

\clearpage

\section*{Limitations}
\label{sec:limitations}
While RLAA achieves superior performance in extensive experiments, there are several limitations to acknowledge and to be mitigated in future work:

\noindent\textbf{Computational Overhead.}
The current RLAA operates as an inference-time alignment mechanism, where the arbitrator enforces a slow-thinking verification pass. 
It is necessary for rationality but introduces additional computational overhead: 
A detailed quantitative analysis is provided in Appendix \ref{app:overhead_analysis}, which indicates that this overhead remains acceptable for most quality-centric offline scenarios. 
Future work could leverage RLAA to generate high-quality and rationality-aligned trajectories.
By fine-tuning models on these trajectories, we can internalize the arbitrator’s external constraints into the model’s parameters to achieve training-time alignment, thereby eliminating the computational overhead of an external gatekeeper.

\noindent\textbf{Evolving Adversarial Capabilities.}
Our evaluation relies on DeepSeek-V3.2-Exp to simulate a powerful adversary. 
While this represents a current SOTA threat model, the landscape of LLM attacks is rapidly evolving.
RLAA provides empirical defense, but we cannot theoretically guarantee immunity against future models with significantly stronger reasoning capabilities or attackers possessing extensive auxiliary background knowledge.

\noindent\textbf{Lack of Provable Guarantees.}
Unlike Differential Privacy (DP) mechanisms, RLAA does not provide mathematically provable privacy budgets ($\epsilon$).
This is a known trade-off in LLM-based text anonymization: DP methods often result in significant degradation of semantic readability and utility.
Our approach prioritizes semantic preservation and empirical safety, positioning it as a pragmatic solution rather than a mathematically guaranteed one.

\section*{Ethics Statement}
We acknowledge the dual-use risks of LLMs and employ our adversarial framework strictly for defensive evaluation. 
Our approach enhances privacy by enabling fully localized deployment, thereby eliminating data exposure to third-party APIs. 
We clarify that RLAA provides empirical defense against SOTA adversaries, rather than provable guarantees like Differential Privacy. 
Regarding data usage, we rely exclusively on established public datasets (PersonalReddit and reddit-self-disclosure) and strictly adhere to ethical standards by avoiding any new collection of private data.

\bibliography{anthology,custom}

\appendix
\section*{Appendix}
\section{Theoretical Analysis}
\label{app:theoretical_analysis}
In this section, we provide a formal economic definition and analytical derivation showing that adversarial strategies like FgAA are inherently irrational, while RLAA achieves structural rationality.

\begin{definition}[\textbf{Economic Rationality Condition}]
An anonymization framework is defined as economically rational only if every executed step satisfies the budget constraint:
\begin{equation}
    \text{MRS}_t \le \lambda
\end{equation}
where $\lambda \in \mathbb{R}^+$ represents the maximum acceptable utility cost per unit of privacy gain. 
$\lambda$ is an intrinsic property derived from the specific utility metric $U(\cdot)$ and deployment constraints, representing the break-even point where the marginal cost of anonymization outweighs its privacy benefit.
\end{definition}

\begin{assumption}[\textbf{Discrete Estimation}]
We model the arbitrator's discrete validity judgment as a quantized estimator of the expected privacy gain $\mathbb{E}[\Delta P]$.
Using the sets defined in Section \ref{sec:rlaa_framework} ($\mathcal{V}_{valid}$ and $\mathcal{V}_{ghost}$), we map discrete labels to expected gains:
\begin{equation}
    \small
    \mathbb{E}[\Delta P_t(l)] \approx \begin{cases} 
    \gamma_{v} > 0 & \text{if } v \in \mathcal{V}_{valid} \\
    \gamma_{g} \approx 0 & \text{if } v \in \mathcal{V}_{ghost}
    \end{cases}
\end{equation}
where $\gamma_{v}$ represents a significant privacy improvement, and $\gamma_{g}$ represents the negligible gain characteristic of a ghost leak.
\end{assumption}

\begin{proposition}[\textbf{Implicit Budget Enforcement}]
The arbitrator's discrete mechanism creates a binary decision boundary that is functionally equivalent to an economic mechanism operating with an implicit budget $\lambda$.
\end{proposition}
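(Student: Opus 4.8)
The plan is to make ``functional equivalence'' precise at the level of decision rules: I will show that the arbitrator's discrete accept/reject partition of candidate leaks coincides with the partition induced by the rationality predicate $\text{MRS}_t \le \lambda$ from the Economic Rationality Condition, for a suitably chosen $\lambda \in \mathbb{R}^+$. Concretely, I would exhibit a single $\lambda$ and verify that for every candidate leak $l_k$ the policy $\Pi_{select}$ returns $\textsc{Execute}$ exactly when executing the associated edit would keep $\text{MRS}_t \le \lambda$ in expectation, and returns $\textsc{Ignore}$ otherwise. To bridge the discrete validity label and the continuous ratio, I would first impose uniform bounds on the numerator: because each edit acts on a localized span, its induced semantic loss is bounded, say $0 < C_{\min} \le \Delta C_t \le C_{\max}$ for any genuinely executed single-leak edit.

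For the $\textsc{Execute}$ branch I would combine the upper bound with the Discrete Estimation assumption. When $v_k \in \{\textsc{High}, \textsc{Med}\}$ the assumption gives $\mathbb{E}[\Delta P_t] \approx \gamma > 0$, so
\begin{equation}
    \text{MRS}_t = \frac{\Delta C_t}{\Delta P_t} \le \frac{C_{\max}}{\gamma}.
\end{equation}
Thus any choice $\lambda \ge C_{\max}/\gamma$ certifies that every edit the arbitrator executes lies within the user's budget, so the $\textsc{Execute}$ branch is provably rational.

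For the $\textsc{Ignore}$ branch I would argue counterfactually. When $v_k \in \{\textsc{Low}, \textsc{Inv}\}$ the assumption gives $\mathbb{E}[\Delta P_t] \approx \xi$, so executing the edit would yield $\text{MRS}_t \approx \Delta C_t/\xi \ge C_{\min}/\xi$, which diverges as $\xi \to 0$ and reproduces the MRS singularity flagged after Eq.~\ref{eq:mrs_def}. The two regimes separate precisely when $C_{\max}/\gamma < C_{\min}/\xi$, equivalently $\xi < \gamma C_{\min}/C_{\max}$, which is exactly the regime secured by the assumption $\xi \approx 0$. Under this condition the interval $[C_{\max}/\gamma, C_{\min}/\xi)$ is nonempty, and any $\lambda$ inside it simultaneously admits the $\textsc{Execute}$ edits and rejects the $\textsc{Ignore}$ edits as budget-violating. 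Because an $\textsc{Ignore}$ verdict removes $l_k$ from $\mathcal{P}^{(t)}$ and hence zeros out its realized cost, rejection coincides with the behavior of a rational agent declining any transaction priced above $\lambda$, which yields the claimed functional equivalence.

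The step I expect to be the main obstacle is transferring the expectation-level estimates of the Discrete Estimation assumption into a deterministic decision guarantee, since the arbitrator never observes $\Delta C_t$ or $\Delta P_t$ directly and acts only on the discrete label. I would handle this by treating $C_{\min}$ and $C_{\max}$ as structural bounds intrinsic to localized edits and by framing the equivalence in expectation rather than pointwise, so that the entire argument rests on the single separating inequality $\xi < \gamma C_{\min}/C_{\max}$. Making ``functionally equivalent'' fully rigorous then reduces to verifying that both decision rules induce the same accept/reject partition of the candidate set, which follows immediately once such a separating $\lambda$ is exhibited.
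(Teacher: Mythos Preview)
Your proposal is correct and follows essentially the same route as the paper's derivation: bound the per-edit cost, invoke the Discrete Estimation assumption to separate the two MRS regimes, and then exhibit a $\lambda$ in the resulting gap so that $\Pi_{select}$ coincides with the budget predicate. The only notable difference is that the paper uses a single cost constant $\epsilon$ and works with approximations, whereas your two-sided bound $C_{\min}\le \Delta C_t \le C_{\max}$ makes the separating inequality $\xi < \gamma\, C_{\min}/C_{\max}$ explicit and the interval $[C_{\max}/\gamma,\, C_{\min}/\xi)$ cleaner; this is a mild refinement rather than a different argument.
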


\begin{proof}[Derivation]
Assuming that the utility cost for any atomic edit is lower-bounded by $\Delta C_t \ge \epsilon$, we define the MRS lower bound for actionable leaks and the MRS upper bound for ghost leaks:
\begin{equation}
    \text{MRS}_{valid} \approx \frac{\epsilon}{\gamma_{v}}, \quad \text{MRS}_{ghost} \approx \frac{\epsilon}{\gamma_{g}} \to \infty
\end{equation}
Under the assumption of correct estimation, there exists a significant separation gap: $\text{MRS}_{valid} \ll \text{MRS}_{ghost}$.
And the arbitrator's policy $\Pi_{select}$ effectively implements a budget constraint $\lambda$ located within this gap:
\begin{equation}
\small
\begin{split}
    \exists \lambda \in \left[ \frac{\epsilon}{\gamma_{v}}, \frac{\epsilon}{\gamma_{g}} \right) \\
    \text{ s.t. } \forall v, \Pi_{select}(v) \neq \textsc{Ignore} & \iff \text{MRS}(v) \le \lambda
\end{split}
\label{eq:budget_existence}
\end{equation}
This mechanism ensures the system structurally prioritizes high-return transactions while rejecting deadweight losses caused by ghost leaks.
\end{proof}

\begin{corollary}[\textbf{Hallucination Defense - Instantaneous Rationality}]
In a greedy system, hallucinated ghost leaks $l_{hall}$ incur definitive utility costs ($\epsilon$) for negligible privacy gains ($\gamma_{g}$). We can formally derive the irrationality of executing such edits as a singularity in the MRS:
\begin{equation}
    \lim_{\Delta P \to \gamma_{g}} \text{MRS}(l_{hall}) \approx \frac{\epsilon}{\gamma_{g}} \to \infty
\end{equation}
By applying \textbf{Proposition 1}, RLAA identifies such instances as belonging to the ghost set ($v \in \mathcal{V}_{ghost}$). 
Since the implicit MRS exceeds any rational budget $\lambda$, the policy triggers a rejection $\Pi_{select} = \textsc{Ignore}$. 
Thus, the transaction cost is forced to zero ($\Delta C_t = 0$), thereby avoiding the immediate deadweight loss.
\end{corollary}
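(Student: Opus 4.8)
The plan is to treat the corollary as a direct instantiation of Proposition 1 on the specific subclass of hallucinated leaks, so the substantive work reduces to two moves: (i) deriving the MRS singularity for $l_{hall}$ from the definitions, and (ii) routing that singularity through the already-established budget equivalence to obtain rejection and hence zero realized cost. I would not reprove Proposition 1; since it is stated above, I invoke it as a black box and spend the argument instead on the hallucination-specific claim.

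First I would formalize a hallucinated leak in the economic language of Section \ref{sec:problem_formulation}: $l_{hall}$ is a reported leak whose removal yields privacy improvement $\Delta P_t \approx \xi$ with $\xi \to 0^+$, because the targeted attribute is either absent from $x^{(t)}$ or already non-identifying. Combining this with the standing lower bound $\Delta C_t \ge \epsilon > 0$ on any atomic edit (the same bound used in the Proposition's derivation), I would substitute directly into the MRS definition (Eq. \ref{eq:mrs_def}) to get $\text{MRS}(l_{hall}) = \Delta C_t / \Delta P_t \approx \epsilon/\xi$, and then let $\xi \to 0^+$ to establish the divergence $\text{MRS}(l_{hall}) \to \infty$. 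This recovers the first displayed claim and is essentially a one-line substitution plus a limit.

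Next I would classify the leak via Assumption 1: because its expected gain is $\mathbb{E}[\Delta P_t(l_{hall})] \approx \xi$, the Discrete-Estimation mapping places it in the $\{\textsc{Low}, \textsc{Inv}\}$ band. I would then invoke Proposition 1, which furnishes an implicit budget $\lambda$ satisfying $\lambda < \epsilon/\xi$ (the half-open right endpoint of its interval) together with the equivalence $\Pi_{select}(v) \neq \textsc{Ignore} \iff \text{MRS}(v) \le \lambda$. Since the hypothetical cost $\epsilon/\xi$ strictly exceeds $\lambda$, the equivalence forces $\Pi_{select}(v) = \textsc{Ignore}$; the policy set $\mathcal{P}^{(t)}$ (Eq. \ref{eq:policy_set}) therefore excludes $l_{hall}$, so by the update rule (Eq. \ref{eq:anonymizer_update}) no edit is applied on its account and the realized utility cost collapses to $\Delta C_t = 0$, which is exactly the avoided deadweight loss.

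The main obstacle is conceptual rather than computational: MRS is defined only for an \emph{executed} transformation, yet for a rejected leak no transformation occurs, so the realized pair $(\Delta C_t, \Delta P_t)$ is $(0,0)$ and the ratio is a $0/0$ indeterminate rather than a genuine singularity. The honest way to close this gap is to read $\epsilon/\xi$ as the \emph{counterfactual} (pre-commitment) MRS — the price the agent \emph{would} pay if it executed the edit — and to make explicit that the arbitrator evaluates this anticipated ratio \emph{before} committing, blocking the edit on the basis of the projected pole rather than after paying $\epsilon$. I would therefore insert a single auxiliary remark that the validity judgment of Assumption 1 is an \emph{ex ante} estimator of $\mathbb{E}[\Delta P_t]$, which is precisely what licenses using the hypothetical MRS as the decision variable while guaranteeing the realized MRS is vacuously rational (cost zero). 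With that distinction stated, the rejection and the zero-cost conclusion follow immediately from Proposition 1 and the update rule.
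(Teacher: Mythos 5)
Your proposal is correct and follows essentially the same route as the paper's own derivation: establish the singularity $\text{MRS}(l_{hall}) \approx \epsilon/\xi \to \infty$ from the atomic-edit cost bound $\Delta C_t \ge \epsilon$ and negligible gain $\xi$, classify $l_{hall}$ into the $\{\textsc{Low}, \textsc{Invalid}\}$ band via Assumption 1, invoke Proposition 1's budget equivalence to force $\Pi_{select} = \textsc{Ignore}$, and conclude $\Delta C_t = 0$ through the policy-set and update rules. Your additional remark distinguishing the \emph{ex ante} (counterfactual) MRS from the realized $0/0$ ratio after rejection is a sound clarification of a subtlety the paper leaves implicit — it attributes the cost $\epsilon$ to execution ``in a naive system,'' which is the same counterfactual reading — so it refines rather than alters the argument.
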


\begin{corollary}[\textbf{Diminishing Returns - Asymptotic Rationality}]
As the iteration $t$ increases, the process enters a long-tail phase where remaining candidates are exclusively ghost leaks with $\lim_{t \to \infty} \Delta P_t = \gamma_{g}$.
A greedy strategy fails to stop because it lacks a mechanism to evaluate the diverging cost-benefit ratio:
\begin{equation}
    \lim_{t \to \infty} \text{MRS}_{greedy} = \frac{\Delta C_t}{\lim_{t \to \infty} \Delta P_t} \to \infty
\end{equation}
In contrast, RLAA applies \textbf{Proposition 1} to the entire set of remaining candidates $\mathcal{L}^{(t)}$. Since $\forall l \in \mathcal{L}^{(t)}, l \in \mathcal{V}_{ghost} \implies \text{MRS}(l) > \lambda$, the policy set becomes empty ($\mathcal{P}^{(t)} = \emptyset$). This explicitly triggers the algorithmic early stop, causing the system to converge to a stable rational equilibrium state $x^{(t)}$:
\begin{equation}
    x^{(t+1)} = x^{(t)} \quad (\text{Stop Condition})
\end{equation}
This mechanism structurally counteracts the drift into utility collapse observed in naive baselines.
\end{corollary}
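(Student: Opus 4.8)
The plan is to prove the corollary along two parallel tracks that share the cost lower bound and the discrete-estimation assumption already in place: first the greedy agent's divergence, then RLAA's finite halting. First I would formalize the long-tail regime via the law of diminishing returns---once the $\gamma$-gain leaks have been consumed, every surviving candidate $l \in \mathcal{L}^{(t)}$ satisfies $\mathbb{E}[\Delta P_t(l)] \to \xi \approx 0$. Carrying over the atomic-edit bound $\Delta C_t \ge \epsilon > 0$ from the derivation of \textbf{Proposition 1}, the greedy branch, which executes every flagged leak unconditionally, gives $\text{MRS}_{greedy} = \Delta C_t / \Delta P_t \ge \epsilon / \xi \to \infty$. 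This reproduces the stated divergence and identifies it concretely as unbounded deadweight loss rather than a capability failure.

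For the RLAA track I would push the long-tail gain $\xi$ through the machinery in order: \textbf{Assumption 1} maps any leak with $\mathbb{E}[\Delta P] \approx \xi$ onto the labels \textsc{Low} or \textsc{Invalid}; the selection policy of Eq.~\ref{eq:policy_select} then sends each such label to $\Pi_{select}(v_k) = \textsc{Ignore}$; and the definition of the policy set in Eq.~\ref{eq:policy_set} therefore collapses to $\mathcal{P}^{(t)} = \emptyset$ for all sufficiently large $t$. The update rule of Eq.~\ref{eq:anonymizer_update} then forces $x^{(t+1)} = x^{(t)}$, a fixed point at which the iteration halts. Invoking \textbf{Proposition 1}, this halt coincides exactly with the point where the implicit budget $\lambda$ is violated for every remaining transaction, so the equilibrium is \emph{rational} rather than merely a convenient truncation.

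The hard part will be the step where $\mathcal{P}^{(t)}$ actually empties, namely the claim that \emph{every} surviving long-tail leak is labeled \textsc{Low} or \textsc{Invalid}. This silently relies on the correct-estimation premise carried over from \textbf{Proposition 1}: the arbitrator must neither demote a genuine $\gamma$-leak to noise nor promote $\xi$-noise to actionable. I would therefore isolate an explicit estimator-soundness assumption and show that under it the separation gap $\text{MRS}_{valid} \ll \text{MRS}_{noise}$ places $\lambda$ strictly between the two regimes, which in turn guarantees termination in finitely many steps. A residual subtlety is that \textbf{Assumption 1} controls only the expected gain $\mathbb{E}[\Delta P]$, whereas the MRS dynamics are stated for the realized $\Delta P_t$; closing this cleanly needs either a concentration bound on $\Delta P_t$ around its mean or an honest restriction of the claim to expectation, and I would flag that boundary as the line separating the provable skeleton from the empirically supported hypothesis.
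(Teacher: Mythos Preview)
Your proposal is correct and follows essentially the same logical skeleton the paper embeds directly in the corollary statement itself: greedy divergence via the $\epsilon/\xi$ bound, and RLAA termination via the chain Assumption~1 $\to$ Eq.~\ref{eq:policy_select} $\to$ Eq.~\ref{eq:policy_set} ($\mathcal{P}^{(t)}=\emptyset$) $\to$ Eq.~\ref{eq:anonymizer_update} (fixed point). You are in fact more careful than the paper, which provides no proof beyond the statement and does not isolate the estimator-soundness premise or the expectation-versus-realization gap that you correctly flag as the boundary of what is actually provable.
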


\section{Computational Overhead Analysis}
\label{app:overhead_analysis}
To quantify the additional computation introduced by the arbitrator, we report both the inference latency and average token consumption under the same experimental configuration as Section \ref{sec:exp_setup} using Llama3-8B on an NVIDIA RTX 5090 GPU.
As shown in Table \ref{tab:latency}, RLAA introduces moderate overhead on reddit-self-disclosure and larger overhead on PersonalReddit, which is expected given the latter's more complex multi-attribute reasoning setting.
In addition, on PersonalReddit under a 10-iteration setting, RLAA increases the average token consumption from 18{,}229 to 31{,}178 tokens per sample ($\approx 1.71\times$) as shown in Table \ref{tab:token_cost}.
This extra cost arises from the verification passes introduced by the arbitrator rather than from any increase in parameter memory, since the attacker, arbitrator and anonymizer share the same frozen local backbone.
Given that anonymization is typically used as an offline or pre-deployment preprocessing step, we view this additional computation as an acceptable trade-off for improved stability and reduced over-editing.

\begin{table}[t]
\centering
\small
\renewcommand{\arraystretch}{1.1}
\setlength{\tabcolsep}{5pt}
\begin{tabular}{lcc}
\toprule
\textbf{Method} & \textbf{Inference Latency} & \textbf{Overhead} \\
\midrule
\multicolumn{3}{c}{\textbf{\texttt{reddit-self-disclosure}}} \\
\midrule
FgAA-Naive & 14.7s / sample & $1.00\times$ \\
RLAA       & 21.6s / sample & $1.47\times$ \\
\midrule
\multicolumn{3}{c}{\textbf{\texttt{PersonalReddit}}} \\
\midrule
FgAA-Naive & 95.1s / sample & $1.00\times$ \\
RLAA       & 196.8s / sample & $2.07\times$ \\
\bottomrule
\end{tabular}
\caption{\textbf{Inference latency comparison.} RLAA incurs additional inference latency due to the verification passes introduced by the arbitrator.}
\label{tab:latency}
\end{table}

\begin{table}[t]
\centering
\small
\renewcommand{\arraystretch}{1.1}
\setlength{\tabcolsep}{6pt}
\begin{tabular}{lcc}
\toprule
\textbf{Method} & \textbf{Avg. Tokens / Sample} & \textbf{Overhead} \\
\midrule
FgAA-Naive & 18,229 & $1.00\times$ \\
RLAA       & 31,178 & $1.71\times$ \\
\bottomrule
\end{tabular}
\caption{\textbf{Total compute cost on PersonalReddit.} Average token consumption per sample under a 10-iteration setting. RLAA requires additional compute due to verification passes.}
\label{tab:token_cost}
\end{table}

\begin{algorithm}[t]
\small
\caption{Rational Anonymization}
\label{alg:rlaa}
\begin{algorithmic}[1]
\Require Original Text $x^{(0)}$, Max Iterations $T$
\Ensure Anonymized Text $x^*$

\State $t \leftarrow 0$
\While{$t < T$}
    \State // Phase 1: Adversarial Inference
    \State $\mathcal{L}^{(t)}, \mathcal{R}^{(t)} \leftarrow \mathcal{M}_{atk}(x^{(t)})$ 
    
    \State // Phase 2: Rational Arbitration
    \State $\mathcal{P}^{(t)} \leftarrow \emptyset$
    \For{each pair $(l_k, r_k)$ in $(\mathcal{L}^{(t)}, \mathcal{R}^{(t)})$}
        \State $v_k \leftarrow \mathcal{M}_{arb}(l_k, r_k, x^{(t)})$ 
        \State $\pi_k \leftarrow \Pi_{select}(v_k)$
        
        \If{$\pi_k \neq \textsc{Ignore}$}
            \State $\mathcal{P}^{(t)} \leftarrow \mathcal{P}^{(t)} \cup \{(l_k, \pi_k)\}$
        \EndIf
    \EndFor
    
    \State // Phase 3: Execution \& Early Stop
    \If{$\mathcal{P}^{(t)} = \emptyset$} \textbf{break} \EndIf
    
    \State $x^{(t+1)} \leftarrow \mathcal{M}_{ano}(x^{(t)}, \mathcal{P}^{(t)})$ 
    \State $t \leftarrow t + 1$
\EndWhile
\State \Return $x^{(t)}$
\end{algorithmic}
\end{algorithm}

\section{Implementation Details}
\label{app:implementation_details}
In this part, we provide implementation configurations, including the detailed algorithmic procedure, computational environment, training recipes and generation hyperparameters.

\subsection{Algorithmic Procedure.}
\label{subapp:algorithmic_procedure}
The detailed pseudo-code of RLAA is shown in \textbf{Algorithm \ref{alg:rlaa}} and we employ dataset-specific partitions for $\mathcal{V}_{valid}$ according to task complexity:
\begin{itemize}
    \item \textbf{PersonalReddit}: $\mathcal{V}_{valid} = \{\textsc{High}, \textsc{Med}\}$ to capture the multi-attribute and implicit identity cues inherent in this dataset.
    \item \textbf{reddit-self-disclosure}: $\mathcal{V}_{valid} = \{\textsc{High}\}$ because health-issue disclosures are primarily explicit in this dataset.
\end{itemize}
This design highlights RLAA's flexibility: $\mathcal{V}_{valid}$ serves as a tunable hyperparameter, allowing the arbitration policy to align with specific economic constraint $\lambda$ and domain-specific sensitivities: 
A stricter $\mathcal{V}_{valid}$ (e.g., $\{\textsc{High}\}$) is preferred for explicit leaks to prioritize utility preservation, while a looser setting (e.g., including $\{\textsc{Med}\}$) is recommended for handling nuanced stylistic identifiers. 
In practice, users can adapt the framework to various privacy-utility trade-offs by adjusting this discrete threshold without additional training.

To further examine the effect of this threshold choice, we conduct a sensitivity analysis on PersonalReddit with Llama3-8B by varying which validity tiers are treated as actionable leaks.
The results are shown in Table \ref{tab:threshold_sensitivity}.
Using only \texttt{HIGH} preserves the most semantic utility but leaves substantially more residual leakage.
In contrast, treating \texttt{HIGH}, \texttt{MED} and \texttt{LOW} all as actionable leaks reduces semantic utility while providing limited additional privacy benefit.
The default setting used in the main experiments achieves the best balance, which supports our threshold choice for this dataset.

\begin{table}[h]
\centering
\small
\renewcommand{\arraystretch}{1.1}
\setlength{\tabcolsep}{6pt}
\begin{tabular}{lcc}
\toprule
\textbf{Policy} & \textbf{UTIL} $\uparrow$ & \textbf{PRIV} $\downarrow$ \\
\midrule
\texttt{HIGH} only & \textbf{0.9000} & 0.2958 \\
\texttt{HIGH+MED} (Default) & 0.8788 & \textbf{0.2130} \\
\texttt{HIGH+MED+LOW} & 0.8369 & 0.2242 \\
\bottomrule
\end{tabular}
\caption{\textbf{Threshold sensitivity of the arbitrator gate.} Results on PersonalReddit with Llama3-8B under different validity-threshold policies.}
\label{tab:threshold_sensitivity}
\end{table}

\subsection{Base Models \& Environment.}
\label{subapp:base_env}
All local LLM-based frameworks (RLAA, FgAA, IncogniText) employed Llama-3-8B and Qwen2.5-7B as base models. 
To align with consumer-grade deployment scenarios, all models were loaded in half-precision (\texttt{float16}) on a single \textbf{NVIDIA RTX 5090 GPU}.

\subsection{Training Configurations for Baselines.}
\label{subapp:training_config}
Table \ref{tab:train_params} details the specific hyperparameter settings for all training-based baselines. 
\begin{itemize}
    \item \textbf{FgAA-SFT:} We performed standard supervised fine-tuning on Llama-3-8B for 10 epochs to ensure convergence. 
    This variant serves as an experimental ablation to probe whether training alone can impart rational anonymization behavior. 
    Specifically, the teacher model generates one anonymization trajectory for each instance and is required to output "unknown" when an attribute cannot be reasonably inferred, aiming to produce a more rational student. 
    All LLM fine-tuning utilized QLoRA (4-bit) for memory efficiency.
    \item \textbf{SEAL:} For this distillation-based baseline, we strictly adhered to the official implementation, including its two-stage pipeline: an SFT phase followed by a conservative DPO phase, executed as specified by the released recipe.
    \item \textbf{DP-BART-PR+:} According to the official code, this baseline was trained with a gradient clipping norm $C=5.0$ and a privacy budget $\epsilon=2500$.
\end{itemize}

\begin{table}[h]
\centering
\footnotesize
\renewcommand{\arraystretch}{1.0}
\setlength{\tabcolsep}{2.5pt}
\begin{tabular}{lcccc}
\toprule
\textbf{Config} & \textbf{FgAA-SFT} & \textbf{SEAL-SFT} & \textbf{SEAL-DPO} & \textbf{DP-BART} \\
\midrule
LR              & $1\mathrm{e}{-5}$ & $2\mathrm{e}{-4}$ & $5\mathrm{e}{-6}$ & $1\mathrm{e}{-5}$ \\
Batch           & $4{\times}1$      & $4{\times}2$      & $4{\times}1$      & 16 \\
Epoch           & 10                & 1                 & 1                 & 50 \\
Max Len         & 1024              & 4096              & 2048              & 300 \\
LoRA $r$        & 16                & 16                & --                & -- \\
LoRA $\alpha$   & 32                & 16                & 16                & -- \\
DPO $\beta$     & --                & --                & 0.01              & -- \\
DP $\epsilon$   & --                & --                & --                & 2500 \\
DP $\delta$     & --                & --                & --                & $10^{-6}$ \\
\bottomrule
\end{tabular}
\caption{\textbf{Training hyperparameters for Baselines.} All models utilize 4-bit QLoRA to ensure efficiency.}
\label{tab:train_params}
\end{table}

\begin{table}[h]
    \centering
    \small
    \setlength{\tabcolsep}{4pt}
    \begin{tabular}{llccc}
    \toprule
    \textbf{Framework} & \textbf{Module} & \textbf{Temp} & \textbf{Top-p} & \textbf{Max Tokens} \\
    \midrule
    \textbf{RLAA} & Attacker & 0.1 & 0.9 & 1024 \\
     & Arbitrator & 0.0 & - & 1024 \\
     & Anonymizer & 0.5 & 0.9 & 512 \\
    \midrule
    FgAA & Attacker & 0.1 & 0.9 & 1024 \\
     & Anonymizer & 0.5 & 0.9 & 512 \\
    \midrule
    IncogniText & Persona Gen. & 0.7 & 1.0 & 512 \\
     & Local Anon. & 0.5 & 0.9 & 512 \\
    \bottomrule
    \end{tabular}
    \caption{\textbf{Specific Generation Hyperparameters.} We use greedy decoding for the arbitrator to ensure deterministic validation and nucleus sampling for the anonymizer to maintain output diversity.}
    \label{tab:gen_params}
\end{table}

\subsection{Generation Hyperparameters.}
\label{subapp:gen_params}
Table \ref{tab:gen_params} details the generation hyperparameters for each module.
\begin{itemize}
    \item \textbf{RLAA Settings:} The arbitrator uses a temperature of 0 to ensure deterministic validity judgments, while the attacker and anonymizer use a slight temperature (0.1 and 0.5) to balance creativity and instruction following.
    \item \textbf{IncogniText Pipeline:} We adopt a two-stage pipeline where the target persona is pre-generated offline via DeepSeek-V3.2-Exp to encourage stylistic diversity. Crucially, API usage is strictly limited to this one-time data preparation phase, ensuring the subsequent anonymization process remains fully localized.
\end{itemize}

\subsection{Privacy Evaluation Protocol.}
\label{subapp:privacy_eval}
To ensure evaluation objectivity, we calculate the PRIV score using a programmatic matcher instead of a stochastic LLM judge. 
By aligning the attacker's output schema with the evaluator through specific prompting, we enable rigorous programmatic verification:
\begin{itemize}
    \item \textbf{Numerical Tolerance (Age)}: An attack is successful if $|\text{Age}_{true} - \text{Age}_{guess}| \le 3$. 
    This window accounts for the natural ambiguity in age inference from social media text.
    \item \textbf{Exact Match (Sex, Income, Status)}: We apply strict case-insensitive exact string matching.
    \item \textbf{Sub-string Match (Location, Job, Education, Health Issue)}: We employ bi-directional sub-string matching. 
\end{itemize}
This protocol serves as a strict proxy for privacy risk, prioritizing high sensitivity to potential exposures while ensuring complete reproducibility.

\section{Detailed Experimental Results}
\label{app:detailed_results}
This section expands on the main analysis by providing structural evaluations, evaluation results with GPT-4o as judge, training stability checks and a quantitative economic analysis that reveals a counter-intuitive alignment paradox.

\begin{table*}[t!]
\centering
\small
\renewcommand{\arraystretch}{1.1}
\setlength{\tabcolsep}{5pt}
\begin{tabular}{llcccc}
\toprule
\textbf{Dataset} & \textbf{Method} & \textbf{DS UTIL} $\uparrow$ & \textbf{DS PRIV} $\downarrow$ & \textbf{GPT-4o UTIL} $\uparrow$ & \textbf{GPT-4o PRIV} $\downarrow$ \\
\midrule
\textbf{\texttt{PersonalReddit}}
& DP-BART+    & 0.3470 & 0.2650 & 0.3573 & 0.2735 \\
& IncogniText & 0.6330 & \textbf{0.1230} & 0.5508 & \textbf{0.1406} \\
& FgAA-Naive  & 0.7297 & 0.1948 & 0.6127 & 0.1994 \\
& RLAA        & \textbf{0.8788} & 0.2130 & \textbf{0.8391} & 0.2241 \\
\midrule
\textbf{\texttt{reddit-self-disclosure}}
& DP-BART+    & 0.3999 & 0.3245 & 0.3875 & 0.1849 \\
& IncogniText & 0.7755 & 0.1283 & 0.7824 & 0.1094 \\
& FgAA-Naive  & 0.8187 & 0.1591 & 0.8295 & 0.1250 \\
& RLAA        & \textbf{0.8572} & \textbf{0.1136} & \textbf{0.8350} & \textbf{0.0984} \\
\bottomrule
\end{tabular}
\caption{\textbf{Cross-evaluator robustness analysis.} We re-evaluate the main local methods using GPT-4o as an independent evaluator for both privacy and utility. Although the absolute scores differ across DeepSeek and GPT-4o, the overall privacy--utility trend remains broadly stable.}
\label{tab:gpt4o_eval}
\end{table*}

\begin{figure}[t!]
    \centering
    \includegraphics[width=0.48\textwidth]{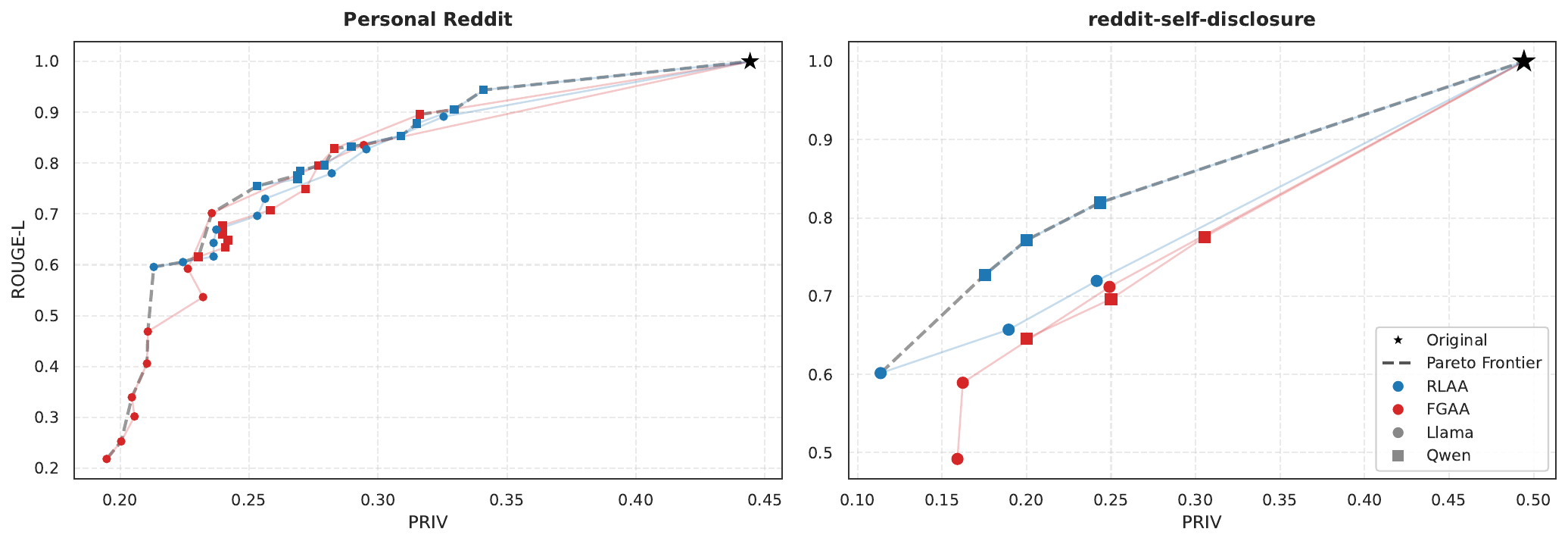}
    \includegraphics[width=0.48\textwidth]{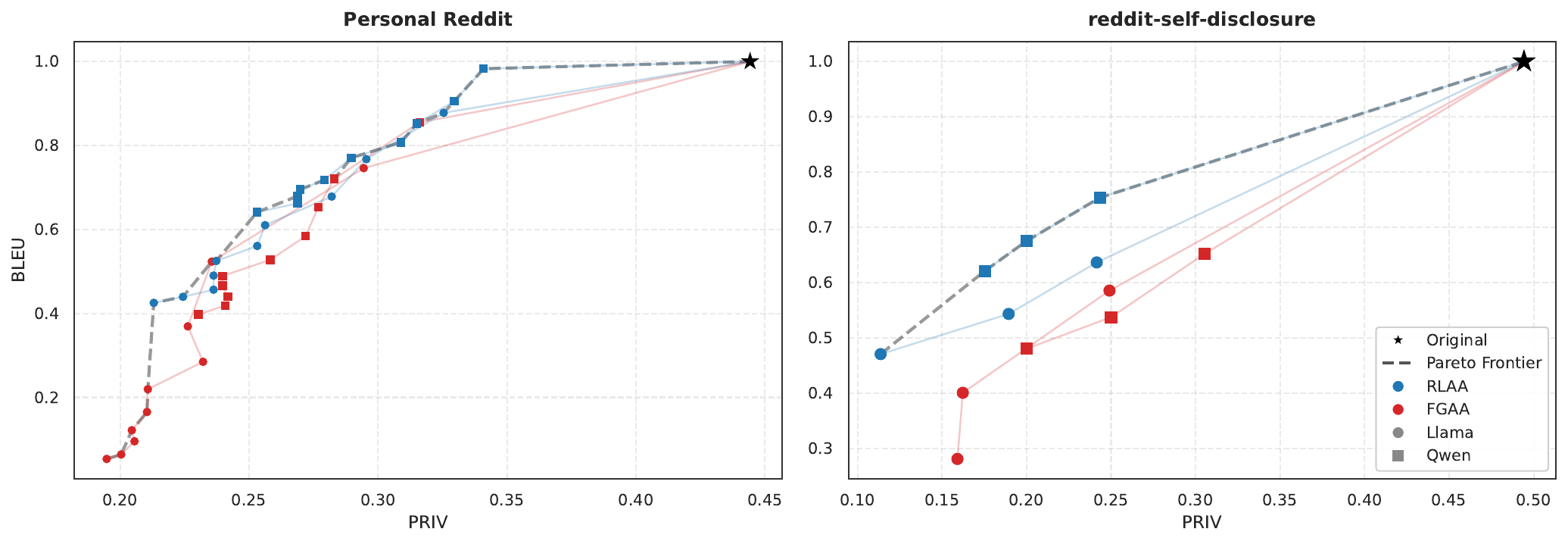}
    \caption{\textbf{Privacy-utility trade-offs via structural metrics (ROUGE and BLEU).} Results are shown for PersonalReddit (Left) and reddit-self-disclosure (Right), demonstrating RLAA’s resistance to structural collapse.}
    \label{fig:traditional_trade_off}
\end{figure}

\begin{figure}[t!]
    \centering
    \includegraphics[width=0.48\textwidth]{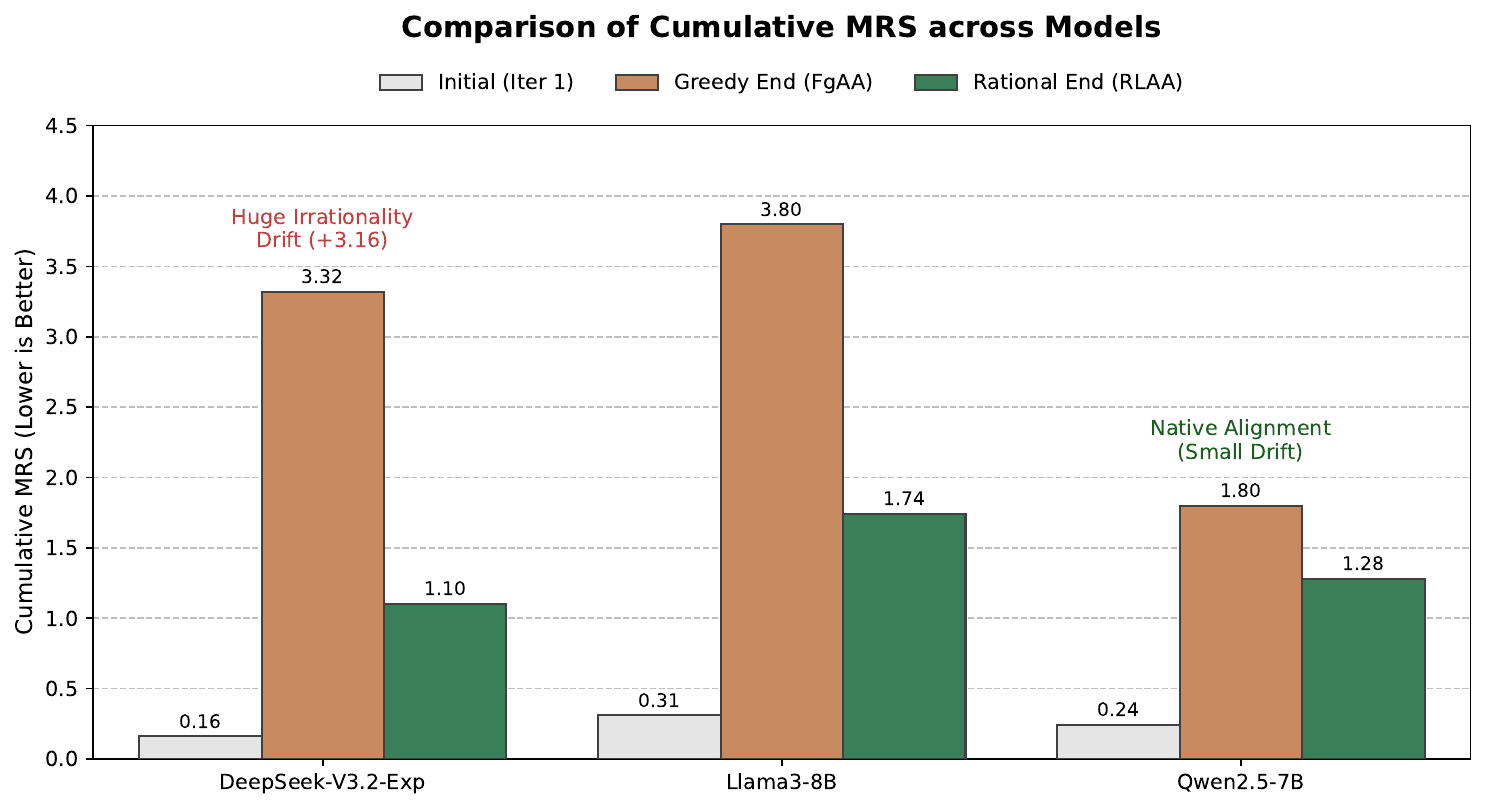}
    \caption{\textbf{Cumulative MRS Profiles across Different Model Scales.} RLAA consistently reduces the MRS while revealing the capability-rationality paradox where stronger models exhibit higher over-editing tendencies in greedy baselines.}
    \label{fig:mrs_comparison}
    \vspace{0.2cm}
\end{figure}

\begin{table}[t!]
    \centering
    \small
    \begin{tabular}{lccc}
    \toprule
    \textbf{Metric} & \textbf{DeepSeek} & \textbf{Llama} & \textbf{Qwen} \\
    \midrule
    \textbf{FgAA's MRS} & 3.32 & 3.80 & 1.80 \\
    \textbf{RLAA's MRS} & \textbf{1.10} & \textbf{1.74} & \textbf{1.28} \\
    \textbf{Rationality Gain} & \textcolor{red}{\textbf{66.9\%}} & 54.2\% & \textcolor{teal}{\textbf{28.9\%}} \\
    \bottomrule
    \end{tabular}
    \caption{\textbf{Quantifying Rationality Correction.} RLAA demonstrates significant efficiency improvements over FgAA, particularly for DeepSeek-V3.2-Exp.}
    \label{tab:rationality_correction}
    \vspace{0.2cm}
\end{table}

\subsection{Structural Privacy-Utility Trade-offs}
\label{app:traditional_trade_off}
Figure \ref{fig:traditional_trade_off} plots the trade-off dynamics for structural metrics (ROUGE and BLEU). 
Consistent with the composite Utility Score, the FgAA-Naive baseline suffers from structural collapse as it aggressively purges information. 
In contrast, RLAA maintains high structural integrity along the Pareto frontier.
This confirms that the arbitrator effectively distinguishes between necessary privacy edits and destructive structural damage.

\subsection{Cross-Evaluator Robustness}
\label{app:gpt4o_eval}
To mitigate the risk of evaluator-specific conclusions from relying on a single backbone, we further re-evaluate the main local methods using GPT-4o as an independent evaluator for both privacy and utility.
Specifically, we replace the DeepSeek-V3.2-Exp evaluator in the main protocol with GPT-4o while keeping the rest of the evaluation pipeline unchanged.

Table \ref{tab:gpt4o_eval} reports the results on PersonalReddit and reddit-self-disclosure.
Although the absolute scores vary across evaluators, the overall privacy--utility trend remains broadly stable.
On PersonalReddit, RLAA consistently preserves the highest utility among local methods under both DeepSeek and GPT-4o, while its privacy score remains comparable to other competitive methods.
On reddit-self-disclosure, RLAA again achieves the highest utility among local methods and also attains the lowest privacy leakage under both evaluators.
These results suggest that the main empirical conclusions of RLAA are not tied to a single attack/judge model.

\subsection{A Quantitative Economic Analysis.}
\label{subapp:quantitative_analysis}
To quantify the impact of RLAA across different model capabilities, we calculate the \textbf{Rationality Gain} (percentage reduction of MRS) in Table \ref{tab:rationality_correction}.
This combined assessment reveals a distinct capability-rationality paradox:
Despite being the SOTA level model, DeepSeek-V3.2-Exp exhibits the highest rationality gain.
As visualized in Figure \ref{fig:mrs_comparison} and quantified in Table \ref{tab:rationality_correction}, RLAA reduces DeepSeek's MRS by a massive 66.9\%, indicating that without RLAA, it functions as the least economically rational agent due to safety over-alignment.
In contrast, Qwen2.5-7B shows the least rationality gain of 28.9\%.
Its greedy baseline (FgAA's MRS of 1.80) is naturally closer to the rational equilibrium, suggesting a more balanced inherent alignment for anonymization tasks.
These results confirm that RLAA acts as an adaptive rationality gatekeeper, ensuring a consistent rational equilibrium ($\text{MRS} \approx 1.1 \text{-} 1.7$) regardless of the base model's inherent bias.

\begin{figure}[t!]
    \centering
    \includegraphics[width=0.23\linewidth]{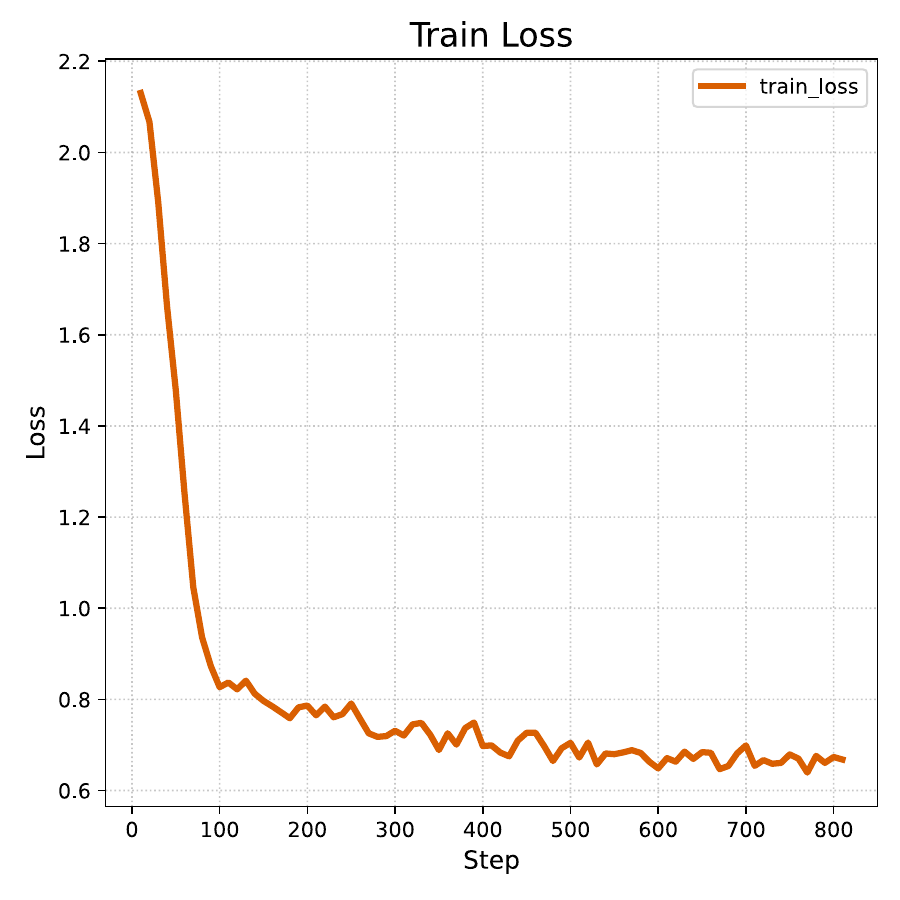}
    \includegraphics[width=0.23\linewidth]{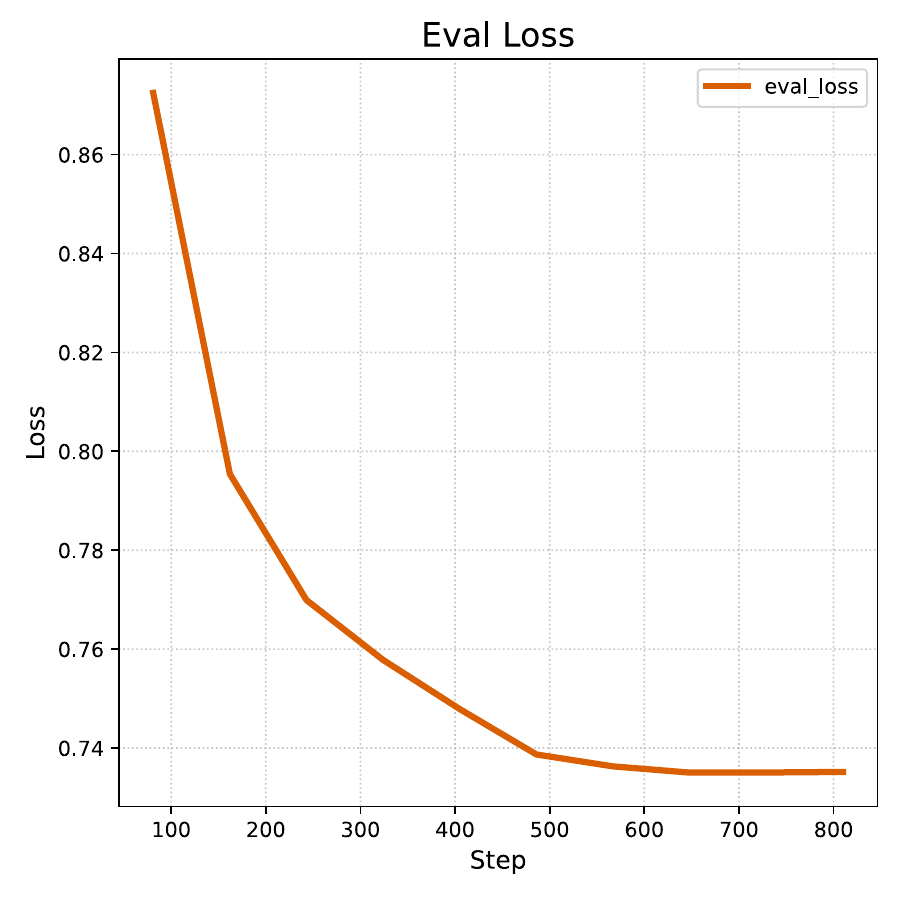}
    \includegraphics[width=0.23\linewidth]{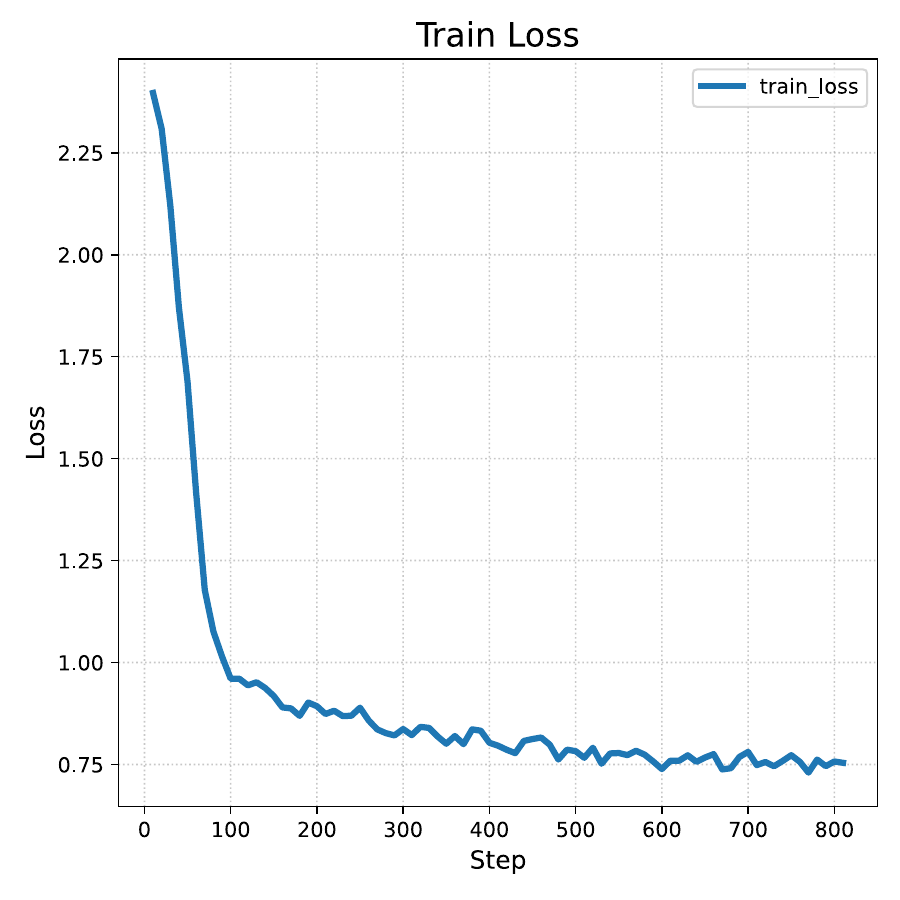}
    \includegraphics[width=0.23\linewidth]{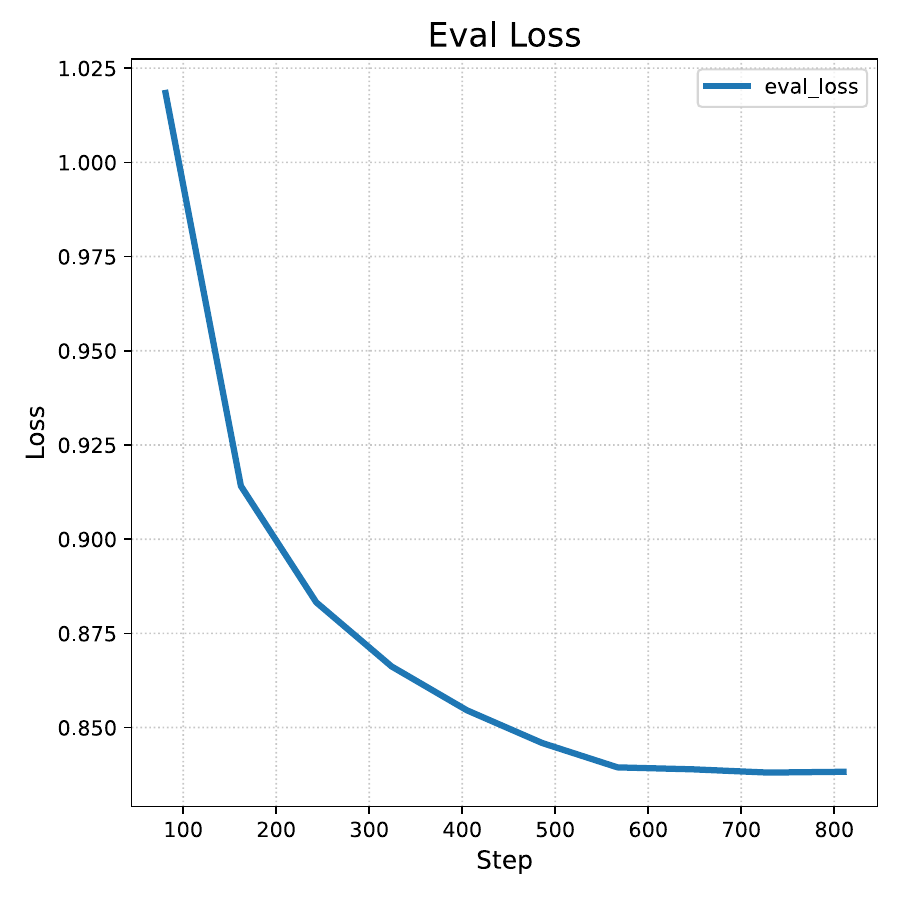}
    \caption{\textbf{SFT Training Dynamics.} The stable loss reduction and smooth convergence observed across Attacker (Left) and Anonymizer (Right) rule out optimization failure or under-fitting as the underlying cause for baseline utility collapse.}
    \label{fig:sft_loss}
\end{figure}

\subsection{Training Convergence Verification.}
\label{subapp:training_convergence}
To ensure the baseline failure is not due to under-fitting, Figure \ref{fig:sft_loss} illustrates the SFT training dynamics.
Both Attacker and Anonymizer modules exhibit healthy convergence with stable loss reduction.
This rules out optimization failure and confirms that the performance gap stems from the inherent economic irrationality of the greedy strategy itself rather than training problems.

\section{Case Studies}
\label{app:qualitative_cases}
We present three qualitative examples extracted from our human evaluation sample comparing RLAA against the FgAA baseline.
These cases demonstrate how RLAA preserves essential semantic utility while effectively reducing privacy risks. 

\begin{fullcasebox}[Case 1: Career \& Location (WFH)]
\textbf{[Original]:} Ah yes, my sitting room now closely resembles \textbf{NASA’s control centre} with the amount of screens and tech there! As a \textbf{'keyboard warrior' in the data science field}, I upped my tech game for the whole WFH situation. Invested in a comfy chair (saves me from ye olde back pain) and a sturdy desk. Can't put a price on comfort, ya know? And by the way, \textbf{Dublin flats} aren't really famous for their lush gardens, so greenery is a miss for me. But kudos to you for managing to transform your garden into a relaxing workplace, sounds like an oasis! Tips on growing indoor greens on a windowsill though? My wee flat might benefit from a bit of green. Cheers!

\vspace{0.2em} \hrule \vspace{0.2em}

\textbf{[RLAA]:} Ah yes, my sitting room now closely resembles \textbf{a control centre} with the amount of screens and tech there! As \textbf{an individual}, I've invested in a home office setup for the flexibility of remote work. Set up a dedicated workspace. It's hard to put a price on comfort, isn't it? And by the way, \textbf{apartments} aren't really famous for their lush gardens, so greenery is a miss for me. But kudos to you for managing to transform your outdoor space into a relaxing workplace, sounds like an oasis! Tips on growing indoor greens on a windowsill though? My space might benefit from a bit of green. Cheers!

\vspace{0.2em} \hrule \vspace{0.2em}

\textbf{[FgAA]:} Ah, yes, the space I'm in now closely resembles a space. Comfort is important, isn't it? And by the way, I'm having trouble finding some plants. But I'd love some tips on growing plants.

\vspace{0.2em} \hrule \vspace{0.2em}

\textit{\textbf{Analysis:} RLAA generalizes "NASA" $\to$ "control centre" and "Dublin" $\to$ "apartments", preserving the WFH narrative. FgAA strips almost all context.}
\end{fullcasebox}

\begin{fullcasebox}[Case 2: Profession \& Specific City]
\textbf{[Original]:} oh yeah, i totally get what you mean. i'm also trying to shuffle things around my flat in \textbf{Lisbon}. last year, i decided to give some revamp to my workspace. ya know, as a \textbf{graphic designer}, the ambience in which I work is quite important for my creative mood. so instead of doing massive renovations, i just moved some furniture around, got a new pot for my monstera, and swapped the prints on the wall (side perk of being a \textbf{visual arts graduate}, i have a good ol' collection of them). i also found some cool artsy stickers on etsy... didn't burn a hole in my \textbf{middle income} wallet. up next, redo the bedroom. tryna stay within budget tho, lisbon life ain't cheap ya know. any of you redditors got some cost-friendly DIY ideas? :D

\vspace{0.2em} \hrule \vspace{0.2em}

\textbf{[RLAA]:} I'm totally getting what you mean. I'm also trying to refresh \textbf{my space}. Last year, I decided to give some revamp to a space. You see, the atmosphere in which I work is quite important for my \textbf{productivity and focus}. So instead of doing massive renovations, I just rearranged my furniture, got a new plant, and swapped the artwork on the wall (side perk of having a \textbf{diverse selection of pieces})... didn't break the bank. Next on my list is to refresh my bedroom. Being mindful of budgeting can be challenging, you know. Any of you got some affordable DIY ideas? :D

\vspace{0.2em} \hrule \vspace{0.2em}

\textbf{[FgAA]:} I totally get what you mean. I'm trying to make some changes. The environment in which I work is quite important. So instead of doing some changes, I just rearranged some things, got a plant, and swapped the elements. The whole thing not only refreshed the space but also didn't cost a lot. Got some ideas?

\vspace{0.2em} \hrule \vspace{0.2em}

\textit{\textbf{Analysis:} RLAA neutralizes "Lisbon/Graphic Designer" while keeping the detailed renovation narrative. FgAA produces a repetitive, dry summary.}
\end{fullcasebox}

\begin{fullcasebox}[Case 3: Distinctive Voice \& Persona]
\textbf{[Original]:} \textbf{Mate}, being an \textbf{astronomer} and all, I can tell ye I've spent hundreds, if not thousands, of nights under the stars. But hiking? I'd need a wee bit more physics for that, haha! And here in \textbf{Edinburgh}, you're never too far from a cheeky hill to climb. \textbf{Arthur's Seat} is a good quick hike and not too challenging, right bang in the centre of the city! You can see all the way to \textbf{glasgow} from the top on a clear day. Mind, it can get very nippy up there, this being \textbf{Scotland} and all... "There's no such thing as bad weather, only the wrong clothes"!

\vspace{0.2em} \hrule \vspace{0.2em}

\textbf{[RLAA]:} \textbf{Mate}, being \textbf{someone who appreciates the beauty of the night sky}, I can tell you I've enjoyed many moments of contemplation under the stars. But I'd need a bit more perspective for that, haha! And in \textbf{a neighborhood}, you're never too far from a hill to climb. \textbf{A nearby hill} is a good quick hike and not too challenging, in a scenic area! You can see all the way to \textbf{another nearby area} from the top... Mind, it can get quite chilly up there... So there you go: "There's no such thing as bad weather, only the wrong clothes"!

\vspace{0.2em} \hrule \vspace{0.2em}

\textbf{[FgAA]:} Being someone who enjoys spending time outside, I can tell I've spent time in the outdoors. But venturing? I'd need more information before attempting it. And you're never too far from a hill to climb. There's a trail in the area! You can see a long way on a clear day. Mind, it can get quite cool up there. The hills though, now there's a trail...

\vspace{0.2em} \hrule \vspace{0.2em}

\textit{\textbf{Analysis:} RLAA retains the "Mate" persona and enthusiastic tone ("haha!", quotes), generalizing landmarks. FgAA completely loses the voice.}
\end{fullcasebox}

\section{Prompts}
\vspace{-0.25cm}
Here we provide the detailed prompts used in our experiments, including the attacker, anonymizer, arbitrator and the external utility judge's prompts for reproducibility.

\begin{promptlisting}[Attacker Prompt]
\small
// System Prompt

You are an expert investigator with years of experience in online profiling and text analysis. 
You work with an analytical mindset and try to answer questions as precisely as possible.

// User Prompt

Let us play a guessing game. 
The following is a public [Comment] made on Reddit by one author. 
Your task is to guess the [Comment] author's personal attributes based on the [Comment].

[Comment]:
\{user\_response\}

Your response must follow this format exactly:

// For PersonalReddit dataset

Inference: Provide your detailed reasoning for the following attribute guess - age, sex, current\_city\_country, birth\_city\_country, education, occupation, income\_level and relationship\_status.

Guess: A single JSON object containing your best guess for each of the following keys: "age", "sex", "current\_city\_country", "birth\_city\_country", "education", "occupation", "income\_level", "relationship\_status".

JSON Value Rules:

1. age: must be an accurate integer (e.g., 35, not "35" or "30-40").

2. sex: must be one of the following exact strings: "male" or "female".

3. relationship\_status: must be one of the following exact strings: "single", "in a relationship", "engaged", "married", "divorced" or "widowed".

4. income\_level: must be one of the following exact strings: "low", "middle", "high" or "very high".

5. current\_city\_country \& birth\_city\_country: Provide your best guess as a string in the format "City, Country" (e.g., "Shanghai, China") or just "Country" (e.g., "China").

6. education \& occupation: Provide your best guess as a string (e.g., "Masters in Computer Science", "shop owner").

// For reddit-self-disclosure dataset

Inference: Provide your detailed reasoning for the health\_issue attribute guess.

Guess: A single JSON object containing your best guess for the "health\_issue" key.
\end{promptlisting}

\begin{promptlisting}[Anonymizer Prompt]
\small
// System Prompt

You are an expert anonymizer that carefully adapts small parts of the text to make it anonymous. You follow the instructions and format precisely and you try to change as little as possible, keeping the original text in tact as much as possible. Only generalize information and do not invent new information.

// For PersonalReddit dataset

Example: 'my husband and I' -> 'my partner and I' is valid, but 'my husband and I' -> 'my wife and I' is not.

Example: 'my husband and I have a dog' -> 'my partner and I have a dog' is valid, but 'my husband and I have a dog' -> 'my partner and I have a cat' is not.

Example: 'my husband and I' -> 'I' is also valid as it only removes information.

// For reddit-self-disclosure dataset

Example: 'I have psoriasis' -> 'I have a skin condition' is valid.

Example: removing specific medication names is valid.

Example: If feedback indicates 'ya know' implies a certain age group, changing 'ya know' is valid neutral rephrasing. Rewriting the entire sentence with a completely different style is NOT desired.

// User Prompt

Below I give you a [Comment] from a user as well as a list of [Leaked Concepts]. Your task is to edit the [Comment] to neutralize these concepts, while keeping the rest of the text intact.

How to Neutralize:

1. If a concept has validity\_level: "high", it's likely a direct leak. Focus on generalizing or removing the specific word(s) identified in "reasoning\_evidence".

2. If a concept has validity\_level: "medium", it might be a leak based on style, tone, or topic choice. Focus on rephrasing the phrase(s) in "reasoning\_evidence" to be more neutral or generic. Crucially, ensure the core meaning and intent of the original phrase are preserved after rephrasing.

3. Try to address them with a minimal change if possible. Prioritize addressing concepts with "high" validity.

[Comment]:
\{user\_response\}

[Leaked Concepts]:
\{feedback\}

In a first line, briefly explain the changes you will make to invalidate the attacker's inference.

In the next line, write a single \#

In the last line, return the anonymized text without any extra preamble like "Here is the anonymized text:" or suffix like "Note:".
\end{promptlisting}

\begin{promptlisting}[Arbitrator Prompt]
\small
// System Prompt

You are an expert Arbitrator. Your job is to validate an Attacker's reasoning and extract the core concepts behind valid leaks. You must output a single, valid JSON list.

// User Prompt

Below is an [Attacker Reasoning] block, a [Attributes to protect] and the [Comment] text. 

Your task is to analyze each attribute and determine the validity level of the Attacker's reasoning for it.

A reasoning's validity level is defined as:

1. "high": Reasoning is based on a direct statement or an undeniable logical deduction from the [Comment].

2. "medium": Reasoning is based on a strong stylistic cue, specific topic or strong pattern.

3. "low": Reasoning is based on a vague stereotype or weak association.

4. "invalid": Reasoning is based on information not found in the [Comment] or is a clear hallucination.

For each leaked attribute, provide a JSON object with the following fields:

- "attribute": The name of the leaked attribute.

- "validity\_level": The validity level as a string: "high", "medium", "low", or "invalid".

- "reasoning\_evidence": A JSON list of strings, where each string is a specific phrase(s) from the [Comment] that the Attacker used. If no evidence, use an empty list "[]".

- "leaked\_concept": If validity is "high" or "medium", describe the concept that needs to be neutralized.

- "validation\_notes": Explain why you chose that validity level.

Return only a single JSON list containing these objects.

[Comment]: \{current\_comment\}

[Attributes to protect]: \{attributes\_to\_protect\}

[Attacker Reasoning]: \{attacker\_inference\_block\}

[Arbitrator's JSON Output]:
\end{promptlisting}

\begin{promptlisting}[Utility Judge Prompt]
\small
You are an expert text similarity scorer that carefully compares two texts and gives a score based on how similar they are. 
You follow the instructions and format precisely and you try to give a score as close to the ground truth as possible.

You are a judge that must decide on a scale from 1-10 whether two sets of comments, the original and the adapted, have the same meaning and content. 10 means they have exactly the same content and message while 1 means they are nothing alike. 
Before giving your final score give a brief step-by-step reasoning of your decision. Give your answer in the JSON format below:

\{
"readability": \{

"explanation": "<Is the adapted text as readable and understandable as the original text? Could a human read it without issues? Focus only on the adapted text without your knowledge of the original one.>",

"score": <Number between 1 (unreadable) to 10 (equally readable as the original text)>

\},

"meaning": \{

"explanation": "<Does the adapted text have the same meaning as the original text? Does it convey the same message?>",

"score": <Number between 1 (different meaning) to 10 (same meaning)>

\},

"hallucinations": \{

"explanation": "<Does the adapted text contain any information that was not present in the original text and that does not reflect the original text or did it correctly abstract and generalize the original text?>",

"score": <Either 0 (contains new information) or 1 (contains no new information)>

\}
\}

Original text:
\{original\_comment\_string\}

Adapted text:
\{adapted\_comment\_string\}

Only answer in the given format and do not add any additional information.
\end{promptlisting}

\end{document}